\documentclass[10pt,letter]{article}
\usepackage[papersize={8.5in,11in}]{geometry}

\usepackage{algorithm}
\usepackage{algorithmic}
\usepackage{amssymb}
\usepackage{amsmath}
\usepackage{amsthm} 
\usepackage{amsopn}
\usepackage{graphics}
\usepackage{epsfig}
\usepackage{color}
\usepackage{enumerate}
\usepackage{pdfsync}

\usepackage{hyperref}
\newcommand{\namedref}[2]{\hyperref[#2]{#1~\ref*{#2}}}
\newcommand{\equationref}[1]{\namedref{Equation}{#1}}
\newcommand{\sectionref}[1]{\namedref{Section}{#1}}
\newcommand{\appendixref}[1]{\namedref{Appendix}{#1}}
\newcommand{\theoremref}[1]{\namedref{Theorem}{#1}}
\newcommand{\lemmaref}[1]{\namedref{Lemma}{#1}}
\newcommand{\definitionref}[1]{\namedref{Def.}{#1}}
\newcommand{\algorithmref}[1]{\namedref{Algorithm}{#1}}
\newcommand{\factref}[1]{\namedref{Fact}{#1}}

\newcommand{\argmax}{\operatornamewithlimits{arg\,max}}


\newcommand{\prob}{\mathsf{Pr}}


\newtheorem{theorem}{Theorem}[section]
\newtheorem{lemma}[theorem]{Lemma}
\newtheorem{definition}[theorem]{Definition}

\newtheorem{claim}[theorem]{Claim}
\newtheorem{fact}[theorem]{Fact}
\newtheorem{remark}[theorem]{Remark}





\newcommand{\vp}{{\bf p}}
\newcommand{\vq}{{\bf q}}
\newcommand{\vd}{{\bf d}}
\newcommand{\ud}{\underline{\vd}}

\newcommand{\vv}{{\bf v}}

\newcommand{\inft}{\mathrm{T}}

\newcommand{\med}{\mathrm{med}}
\newcommand{\zero}{{\bf 0}}
\newcommand{\one}{{\bf 1}}
\newcommand{\uq}{\underline{\vq}}
\newcommand{\oq}{\overline{\vq}}
\newcommand{\iter}{{(\infty)}}
\newcommand{\vs}{{\bf s}}
\newcommand{\vx}{{\bf x}}
\newcommand{\vy}{{\bf y}}
\newcommand{\vr}{{\bf r}}
\newcommand{\vu}{{\bf u}}
\newcommand{\vl}{{\boldsymbol \ell}}


\begin{document}


\title
{Optimal Pricing in Social Networks \\ with Incomplete Information}

\newcommand{\email}[1]{\texttt{\color{black}#1}}

\newcommand*\samethanks[1][\value{footnote}]{\footnotemark[#1]}

\author{
Wei Chen\textsuperscript{1}\and
Pinyan Lu\textsuperscript{1}\and
Xiaorui Sun\textsuperscript{2,}\thanks{Part of this work was done while the authors were visiting Microsoft Research Asia.}\and
Bo Tang\textsuperscript{3,}\samethanks\and
Yajun Wang\textsuperscript{1}\and
Zeyuan~Allen~Zhu\textsuperscript{4,}\samethanks\ \textsuperscript{,}\thanks{A preliminary version of this work has appeared as a chapter of the B.Sci thesis of this author \cite{Zhu10}.} \\ \\
\vspace{-3pt}\small \textsuperscript{1} Microsoft Research Asia. \email{\{weic,pinyanl,yajunw\}@microsoft.com} \\
\vspace{-3pt}\small \textsuperscript{2} Columbia University. \email{xiaoruisun@cs.columbia.edu} \\
\vspace{-3pt}\small \textsuperscript{3} Shanghai Jiaotong University. \email{tangbo1@sjtu.edu.cn} \\
\vspace{-3pt}\small \textsuperscript{4} MIT CSAIL. \email{zeyuan@csail.mit.edu}
}


\maketitle

\begin{abstract}
In revenue maximization of selling a digital
product in a social network, the utility of an agent is often considered to have two parts:
a private valuation, and linearly additive influences from other agents.
We study the incomplete information case where agents know a common distribution about others' private valuations, and make decisions simultaneously.
The ``rational behavior'' of agents in this case is captured by the well-known Bayesian Nash equilibrium.

Two challenging questions arise: how to \emph{compute} an equilibrium and how to \emph{optimize} a pricing strategy accordingly to maximize the revenue assuming agents follow the equilibrium? In this paper, we mainly focus on the natural model where the private valuation of each agent is sampled from a uniform distribution, which turns out to be already challenging.

Our main result is a polynomial-time algorithm that can \emph{exactly}
compute the equilibrium and the optimal price, when pairwise influences are non-negative. If negative influences are allowed, computing any equilibrium even approximately is PPAD-hard.
Our algorithm can also be used to design an FPTAS for optimizing discriminative price profile.
\end{abstract}

\setcounter{tocdepth}{2}
\setcounter{page}{0}

\section{Introduction}
\label{sec:intro}
%
%
Social influence in large social networks provides huge monetization
        potential, which is under intensive
        investigation by companies as well as
        research communities.
Many digital products exhibit explicit social
values. For example, Zune players can share music with each other, so
the utility one can expect from a Zune player partially depends on
the number of her friends having the same product.
In a more direct case of instant
messaging, the utility for one user is critically determined by the
number of her friends who use the same instant messenger.
Therefore, how to design, market, and price products with external
        social values depends intimately on the understanding and
        utilization of social influence in social networks.





In this paper, we study the problem of selling a digital product to agents in a social network.
To incorporate social influence,
we assume each agent's utility of having
the product is the summation of two parts: the private intrinsic
valuation and the overall influence from her
friends who also have the product. In this paper, we study the linear influence case, i.e., the
overall influence is simply the summation of
influence values from her friends who have the product.

Given such assumption, the purchasing decision of one agent is not solely made
        based on her own
valuation, but
        also on information about her friends' purchasing decisions.
However, a typical agent does not have complete information about others' private valuations,
        and thus might make the decision based on her belief of
other agents' valuations.

In this paper, we study the case when this belief forms a
public
distribution, and rely on the solution concept of
        Bayesian Nash equilibrium~\cite{Harsanyi1967}.
Specifically, each agent knows her own private valuation (also referred to
        as her {\em type}); in addition, there is a distribution of this private valuation, publicly known
        by everyone in the network as well as the seller.
In this
        paper, we study the case that the joint distribution is a product
        distribution, and the valuations for all agents are sampled
        independently from possibly different uniform distributions.

\paragraph{Computing the Equilibria.}

Usually, there exist multiple equilibria in this game.
We first study the case when all influences are {\em non-negative}.
We show that there exist two special ones: the {\em pessimistic equilibrium}
and the {\em optimistic equilibrium}, and all other equilibria are between these two.
We then design a polynomial time algorithm to compute the pessimistic
(resp. optimistic) equilibrium \emph{exactly}.

The overall idea is to utilize the fact that the pessimistic
(resp. optimistic) equilibrium is ``monotonically increasing'' when
the price increases. However, the iterative method requires
exponential number of steps to converge, just like many potential
games which may well be PLS-hard.
Our algorithm is based on the line sweep paradigms, by increasing the
price $p$ and computing the equilibrium on the way. There are several
challenges we have to address to implement the line sweep
algorithm. See \sectionref{SUBSEC:OUTLINE}
for more discussions on the difficulties.
%

On the negative side, when there exist negative influences among
agents, the monotone property of the equilibria does not hold. In fact,
we show that computing an approximate equilibrium is PPAD-hard
for a given price, by a reduction from the two player Nash
equilibrium problem.


\paragraph{Optimal Pricing Strategy.}
When the seller considers
offering a uniform price,
our proposed line sweep algorithm
calculates the equilibrium as a function of the price.
This closed form allows us to find the price for the optimal revenue.

We also discuss the extensions to discriminative pricing
setting:
agents are partitioned into $k$
groups and the seller can offer different prices to different
groups. Depending on whether the algorithm can choose the partition or
not, we discuss the hardness and approximation algorithms of these
extensions.

%
%

\subsection{Related Work}
\label{app:related-work}
\paragraph{Influence maximization.}
Cabral et al.~\cite{Cabral1999} studied the property of the
optimal prices over time with network externality and strategic
agents. They show that the seller might set a low introductory price to attract a critical mass of
agents.
Another notable body of work in computer science is the \emph{optimal seeding} problem~(e.g. Kempe et al.~\cite{Kempe2003} and Chen et al.~\cite{Chen2009}), in which a set of $k$ seeds are selected to maximize the total influence according to some stochastic propagation model.

\paragraph{Pricing with equilibrium models.}
When there is social influence, a large stream of literature is focusing on simultaneous games. This is also known as the ``two-stage'' game where the seller sets the price in the first stage, and agents play a one-shot game in their purchasing decisions. Agents' rational behavior in this case is captured by the Nash equilibrium (or Bayesian Nash equilibrium if the information is incomplete).

The concept and existence of pessimistic and optimistic equilibria is
not new. For instance, in analogous problems with externalities,
Milgrom and Roberts \cite{Milgrom-Roberts1990} and Vives
\cite{Vives1990} have witnessed the existence of such equilibria
in the {\em complete information} setting. Notice that our pricing
problem, when
restricted
to complete information, can be trivially solved by an iterative method.

In incomplete information setting, Vives and Van Zandt
\cite{Vives-VanZandt2007} prove a similar existential result using
iterative methods.
However, they do not provide any convergence guarantee.
In our setting, we have shown in \sectionref{app:method} that such type of iterative methods may take exponential time to converge. Our proposed algorithm instead \emph{exactly} computes the equilibrium, through a much move involved (but constructive) method. In parallel to this work, Sundararajan \cite{Sundararajan2008} also discover the monotonicity of the equilibria, but for symmetry and limited knowledge of the structure (only the degree distribution is known).

It is worth noting that those works above do consider the case when
influence is not linear (but for instance supermodular). Though our paper focuses on
linear influences, our
monotonicity results for equilibria do easily extend to non-linear ones. See \sectionref{SEC:PRE}.

When the influence is linear, Candogan, Bimpikis and Ozdaglar
\cite{Candogan-Bimpikis-Ozdaglar-2010} study the problem with
(uniform) pricing model for a divisible good on sale.
It differs from our paper in the model: they are in complete information and divisible good setting;
more over, they have relied on a diagonal dominant assumption, which simplifies the problem and ensures the uniqueness of the equilibrium.

Another paper for linear influence is by Bloch and Querou
\cite{Bloch-Querou2009}, which also studies the uniform pricing
model. When the influence is small, they approximate the influence
matrix by taking the first 3 layers of influence, and then an
equilibrium can be easily computed. They also provide experiments to
show that the approximation is numerically good for random inputs.

\paragraph{Pricing with cascading models.}
In contrast to the simultaneous-move game considered by us (and many
others), another stream of work focuses on the cascading models with social influence.

Hartline, Mirrokni and Sundararajan~\cite{Hartline2008}  study the {\em explore and exploit} framework. In their model the seller offers the product to
the agents in a sequential manner, and assumes all agents are {\em
  myopic}, i.e., each agent is making the decision based
on the known results of the previous agents in the sequence. As they have pointed out, if the
pricing strategy of the seller and the private value distributions of
the subsequent agents are publicly known, the agents can make more
``informed'' decisions than the myopic ones. In contrast to them, we
consider ``perfect rational'' agents in the simultaneous-move game, where
agents make decisions {\em in anticipation} of what others may do given their
beliefs to other agents' valuations.

Arthur et al.~\cite{Arthur2009} also use the explore and exploit framework, and study a similar problem; potential buyers do not arrive sequentially as in~\cite{Hartline2008}, but can choose to buy the product with some probability only if being recommended by friends.

Recently, Akhlaghpour et al.~\cite{Akhlaghpour2010} consider the
multi-stage model
that the seller sets different prices for each stage. In contrast to
\cite{Hartline2008}, within each stage, agents are ``perfectly
rational'', which is characterized by the pessimistic equilibrium in
our setting with {\em complete information}.
As mentioned in~\cite{Akhlaghpour2010}, they did not consider
the case where a rational agent may defer her decision to later stages in
order to improve the utility.

\paragraph{Other works.}
If the value of the product does not exhibit social influence, the
seller can maximize the revenue following the optimal auction process by
the seminal work of Myerson~\cite{Myerson1981}. Truthful auction
mechanisms have also been studied for digital goods, where one can
achieve constant ratio of the profit with optimal fixed
price~\cite{Goldberg2006,Hartline2005}.
On computing equilibria for problems that guarantees to find an equilibrium through iterative methods, most of them, for instance the famous congestion game, is proved to be PLS-hard~\cite{PLScomplete}.

\section{Model and Solution Concept}
\label{SEC:PRE}

We consider the sale
of one digital product by a seller with zero cost, to the set of agents
$V=[n]=\{1,2,\ldots,n\}$ in a social network. The network is
 modeled as a simple {\em directed} graphs $G=(V,E)$ with no
self-loops.


\begin{itemize}
\item {\bf Valuation}: Agent $i$ has a private value $v_i\geq 0$
  for the product. We assume $v_i$ is sampled from a uniform
  distribution with interval $[a_i,b_i]$ for $0\leq a_i < b_i$, which we denote as
  $U(a_i,b_i)$.
  The values $a_i$ and $b_i$ are common knowledge.
\item {\bf Price}: We consider the seller offering the product at a
uniform
price $p$.
  We postpone discriminative pricing models in \appendixref{SEC:MULTIPRICE}.
\item {\bf Revenue:} Let $\vd=\{d_1,\ldots,d_n\} \in \{0,1\}^n$ be the
  decision vector the agents make, i.e., $d_i = 1$ if agent $i$ buys the
  product and $0$ otherwise. The revenue of the seller is defined as
  $\sum_i p\cdot d_i$.
When the decisions are random variables, the revenue is defined as
the expected payments received from the users.
\item {\bf Influence}: Let matrix
  $\inft = (\inft_{j,i})$ with $\inft_{j,i}\in \mathbb{R}$ and
  $i,j\in V$
  represent the influences among agents, with $\inft_{j,i} =0$ for
  all $(j,i)\notin E$. In particular, $\inft_{j,i}$ is the utility that agent $i$ receives
  from agent $j$, if both of them buy the product.
Except for the hardness result, we consider
$\inft_{j,i}$ to be
  non-negative.
\item {\bf Utility}:
Let $\vd_{-i}$ be the decision vector
  of the agents other than agent $i$. For convenience, we denote
  $\langle d_i',\vd_{-i} \rangle$ the vector by replacing the $i$-th entry of $\vd$ by
  $d_i'$.  In particular, given the influence matrix $\inft$, the
  utility is defined as:
  \begin{equation}
   \label{eqn:utility}
    u_i( \langle d_i,\vd_{-i} \rangle, v_i, p) = \left\{
    \begin{array}{ll}
v_i - p +\sum_{j\in [n]}d_j\cdot \inft_{j,i}, &\mbox{if } d_i = 1\\
0,&\mbox{if $d_i = 0$}
    \end{array}
  \right.
  \end{equation}
\end{itemize}

\begin{remark}
In the definition, we require $a_i< b_i$. This
condition can be relaxed to $a_i\leq b_i$, i.e., we are able to handle
the fixed value case as well. For instance, this only requires a
separate case analysis in our proposed line sweep algorithm
in \sectionref{SEC:BAYESIAN}. However, for
ease of
presentation,
we assume $a_i < b_i$ in the
remaining of the
paper, unless otherwise noted.
\end{remark}

Our question is: ``which price shall
the seller offer
to maximize the total revenue?''
In order to answer this question, it is
necessary to study the agents' {\em rational behavior} using the concept Bayesian Nash equilibrium (BNE).
For ease of presentation, we redefine the equilibrium based on the buying
probability of the agents. We will show that they are equivalent. Its proof
is in \appendixref{app:pre}.
\begin{definition}\label{def_probeq}
  The probability vector
  $\vq=(q_1,q_2,...,q_n)\in [0,1]^n$ is an {\em equilibrium at price $p$}, if (where $\med$ is the median function)
\begin{small}
\begin{align}
\forall i\in [n]\;,  q_i &=
  \prob_{v_i \sim U(a_i,b_i)}\Big[v_i - p + \sum_{j\in [n]} { T_{j,i} \cdot
    q_j} \geq 0\Big] \notag \\
& =\med{\left\{0,1,\frac{b_i-p+\sum_{j\in
        [n]}{T_{j,i}q_j}}{b_i-a_i}\right\}}. \label{eq_probeq0}
\end{align}
\end{small}
\end{definition}


\begin{lemma}
\label{lem:eq_eq}
  Given equilibrium $\vq$, the strategy profile such agent $i$ ``buys
  the product if and only if her internal valuation $v_i\geq p-\sum_{j\neq
    i}{T_{j,i}q_j}$'' is a Bayesian Nash equilibrium; on the contrary, if
  a strategy profile is a Bayesian Nash equilibrium, then the
  probability that agent $i$ buys the product satisfies \equationref{eq_probeq0}.
\end{lemma}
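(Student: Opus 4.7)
The plan is to prove the two directions of \lemmaref{lem:eq_eq} in sequence, and the workhorse in both directions is the same elementary calculation: when valuations are drawn independently and influence is linear, an agent's expected externality collapses to the deterministic quantity $\sum_{j\in[n]} T_{j,i} q_j$, so best response reduces to a simple threshold rule on the private type $v_i$.

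First I would handle the forward direction. Fix an equilibrium probability vector $\vq$ satisfying \equationref{eq_probeq0} and define the pure strategy profile $\sigma$ in which agent $i$ of type $v_i$ buys iff $v_i \geq p - \sum_{j\neq i} T_{j,i} q_j$ (using $T_{i,i}=0$ since the graph has no self-loops). I would first verify that the marginal probability that agent $i$ buys under $\sigma$ is exactly $q_i$: since $v_i \sim U(a_i,b_i)$, this probability equals $\med\{0,1,(b_i-p+\sum_j T_{j,i}q_j)/(b_i-a_i)\}$, which by hypothesis is $q_i$. Next, because valuations are independent and influences enter the utility linearly, the expected utility of agent $i$ with type $v_i$ from buying, when the others play $\sigma_{-i}$, is
\begin{equation*}
v_i - p + \sum_{j\neq i} T_{j,i}\cdot\prob[d_j=1] = v_i - p + \sum_{j\in[n]} T_{j,i} q_j,
\end{equation*}
while the expected utility from not buying is $0$. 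Hence the best response is precisely the threshold rule defining $\sigma_i$, establishing that $\sigma$ is a BNE.

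For the converse, let $\sigma'$ be any BNE and define $q_i := \prob[d_i=1]$ where the probability is over $v_i\sim U(a_i,b_i)$ and any randomization used by $\sigma'_i$. By the same linearity-and-independence computation as above, the expected utility of agent $i$ of type $v_i$ from buying is $v_i - p + \sum_{j\in[n]} T_{j,i} q_j$, and from not buying is $0$. Since $\sigma'_i$ is a best response, with probability $1$ over $v_i$ the agent buys when $v_i > p - \sum_j T_{j,i} q_j$ and does not buy when $v_i < p - \sum_j T_{j,i} q_j$; the measure-zero boundary $v_i = p - \sum_j T_{j,i} q_j$ contributes nothing under the uniform distribution. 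Integrating over $v_i$ then yields exactly \equationref{eq_probeq0} for $q_i$.

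The main subtlety I expect is the independence step that turns $\mathbb{E}[\sum_j T_{j,i} d_j]$ into $\sum_j T_{j,i} q_j$: this relies crucially on the product structure of the prior and on agent $i$'s own buying decision not appearing in the sum (which is why the self-loop-free assumption and $T_{i,i}=0$ matter). A secondary care point is the possibility of ties at the threshold in the backward direction, which I handle by observing that $U(a_i,b_i)$ places zero mass on any single point so any tie-breaking rule induces the same marginal $q_i$; the boundary cases where the threshold falls outside $[a_i,b_i]$ are exactly what the $\med\{0,1,\cdot\}$ in \equationref{eq_probeq0} absorbs.
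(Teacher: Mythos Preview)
Your proposal is correct and follows essentially the same approach as the paper: both arguments use independence of the valuations and linearity of influence to collapse agent $i$'s expected externality to $\sum_{j} T_{j,i} q_j$, deduce that best response is a threshold rule on $v_i$, and dismiss the threshold boundary as a measure-zero event under $U(a_i,b_i)$. The only cosmetic difference is that the paper treats the converse direction first and is slightly terser about the consistency check that the threshold rule induces buying probability exactly $q_i$, which you spell out more carefully.
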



\equationref{eq_probeq0} can be also defined in the language of a transfer function, which we will extensively reply on in the rest of the paper.


\begin{definition}[Transfer function]
\label{def:transfer}
Given price $p$, we define the transfer function $f_p:[0,1]^n
\rightarrow [0,1]^n$ as
\begin{small}
\begin{equation}
\label{eq_transfer}
[f_p(\vq)]_i=\med{\{0,1,[g_p(\vq)]_i\}}
\end{equation}
\end{small}
in which
\begin{small}
$$[g_p(\vq)]_i = \frac{b_i-p+\sum_{j\in [n]}{T_{j,i}q_j}}{b_i-a_i}.$$
\end{small}
Notice that $\vq$ is an equilibrium at price $p$ if and only $f_p(\vq)=\vq$.
\end{definition}

Using Brouwer fixed point theorem, the existence of BNE is not surprising, even when influences are negative.
However, we will show in \appendixref{SEC:HARDNESS} that computing BNE
will be PPAD-hard
with negative influences.
We now define the pessimistic and optimistic
equilibria (similar to e.g. Van Zandt and Vives~\cite{Vives-VanZandt2007}) based on the transfer function.
\begin{definition}
  \label{def:poeq}
Let $f^{(1)}_p = f_p$, and $f_p^{(m)}(\vq)= f_p( f_p^{(m-1)}(\vq))$ for $m \geq 2$.
  When all influences are {\em non-negative}, we define
\begin{itemize}
\item{\bf Pessimistic equilibrium}: $\uq(p)=\lim_{m \rightarrow
    \infty}{f_p^{(m)}(\zero)}$;
\item{\bf Optimistic equilibrium}: $\oq(p)=\lim_{m \rightarrow
    \infty}{f_p^{(m)}(\one)}$.
\end{itemize}
\end{definition}

We remark that both limits exist by monotonicity of $f$ (see \factref{lemma_monotonicity} below), when all
influences are non-negative.
In addition,
 $\uq(p)$ and $\oq(p)$ are both equilibria themselves, because
 $f_p(\uq(p))=\uq(p)$ and $f_p(\oq(p))=\oq(p)$. We later show that
 $\uq(p)$ and $\oq(p)$ are the lower bound and upper bound for any
 equilibrium at price $p$ respectively.
Now we state some properties of equilibria, which we will use extensively later. Their proofs are in \appendixref{app:pre}.

For two vectors $\vv_1,\vv_2\in \mathbb{R}^n$,
 we write $\vv_1\geq \vv_2$ if $\forall i\in [n],\, [\vv_1]_i \geq
 [\vv_2]_i$ and we write $\vv_1 > \vv_2$ if $\vv_1 \geq \vv_2 \land
 \vv_1\neq \vv_2$.

\begin{fact}\label{lemma_monotonicity}
  When all influences are non-negative,
  given $p_1 \leq p_2, \vq^1 \leq \vq^2$, the transfer function
  satisfies $f_{p_2}(\vq^1) \leq f_{p_1}(\vq^1) \leq f_{p_1}(\vq^2)$.
\end{fact}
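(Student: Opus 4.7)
The plan is to prove both inequalities componentwise by reducing each to a corresponding monotonicity statement about $g_p$, then invoking the fact that the clipping operation $x \mapsto \med\{0,1,x\}$ is monotone non-decreasing in $x$.

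First I would observe the key elementary fact: for any real numbers $x \leq y$, we have $\med\{0,1,x\} \leq \med\{0,1,y\}$. This is immediate from the piecewise description of the clipping, but I would state it explicitly since both inequalities in the fact go through it.

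Next I would verify the two required monotonicity properties of $g_p$ separately. For the first inequality $f_{p_2}(\vq^1) \leq f_{p_1}(\vq^1)$, fix any coordinate $i$ and compare
\[
[g_{p_1}(\vq^1)]_i - [g_{p_2}(\vq^1)]_i = \frac{p_2 - p_1}{b_i - a_i} \geq 0,
\]
since $p_1 \leq p_2$ and $b_i > a_i$. Thus $[g_{p_2}(\vq^1)]_i \leq [g_{p_1}(\vq^1)]_i$, and applying the clipping monotonicity coordinatewise yields $f_{p_2}(\vq^1) \leq f_{p_1}(\vq^1)$. For the second inequality $f_{p_1}(\vq^1) \leq f_{p_1}(\vq^2)$, I would use the non-negativity hypothesis $T_{j,i} \geq 0$: since $\vq^1 \leq \vq^2$ entrywise,
\[
\sum_{j \in [n]} T_{j,i} q^1_j \leq \sum_{j \in [n]} T_{j,i} q^2_j,
\]
and hence $[g_{p_1}(\vq^1)]_i \leq [g_{p_1}(\vq^2)]_i$. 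Clipping preserves this.

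There is no serious obstacle here; the only subtle point worth flagging in the write-up is that the second inequality is exactly where the non-negativity assumption on $T$ enters the argument, which is why the entire monotonicity machinery (and hence the existence of the pessimistic/optimistic equilibria in Definition~\ref{def:poeq}) breaks down once negative influences are allowed.
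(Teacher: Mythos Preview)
Your argument is correct and is exactly the natural componentwise verification one would expect. The paper in fact states this as a Fact without supplying a proof (Appendix~\ref{app:pre} only proves Lemmas~\ref{lem:eq_eq} and~\ref{lemma_eqproperty}), so your write-up simply fills in the routine details the authors left implicit.
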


\begin{lemma}\label{lemma_eqproperty}
  When all influences are non-negative,
equilibria satisfy the following properties:
\begin{enumerate}[\ \ \ \rm a)]
  \item For any equilibrium $\vq$ at price $p$, we have $\uq(p) \leq \vq
    \leq \oq(p)$.
  \item Given price $p$, for any probability vector $\vq \leq \uq(p)$,
   we have $f_p^\iter(\zero)=\uq(p)=f_p^{(\infty)}(\vq)$.
  \vspace{1mm}
  \item Given price $p_1 \leq p_2$, we have $\uq(p_1) \geq \uq(p_2)$ and
    $\oq(p_1) \geq \oq(p_2)$.
  \vspace{1mm}
\item $\uq(p) = \lim_{\varepsilon \rightarrow 0+} \uq(p+\varepsilon)$ and
      $\oq(p) = \lim_{\varepsilon \rightarrow 0-} \uq(p+\varepsilon)$.
\end{enumerate}
\end{lemma}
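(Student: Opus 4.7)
The plan is to leverage \factref{lemma_monotonicity} (monotonicity of $f_p$ in both $p$ and $\vq$) throughout. For part (a), I would start from the trivial bounds $\zero \leq \vq \leq \one$ and apply $f_p$ repeatedly. By monotonicity of $f_p$ in $\vq$, and using $f_p(\vq)=\vq$, an easy induction gives $f_p^{(m)}(\zero)\leq \vq \leq f_p^{(m)}(\one)$ for every $m$. Taking $m \to \infty$ sandwiches $\vq$ between $\uq(p)$ and $\oq(p)$.

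For part (b), I would use the same sandwich idea but with $\uq(p)$ as the upper bound. Given any $\vq$ with $\zero \leq \vq \leq \uq(p)$, the monotonicity of $f_p$ yields $f_p^{(m)}(\zero)\leq f_p^{(m)}(\vq)\leq f_p^{(m)}(\uq(p))=\uq(p)$, where the final equality uses the fact that $\uq(p)$ is a fixed point. Since the left side converges to $\uq(p)$ by definition, the middle term is forced to the same limit.

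For part (c), I would prove by induction on $m$ that $f_{p_2}^{(m)}(\zero) \leq f_{p_1}^{(m)}(\zero)$ and $f_{p_2}^{(m)}(\one) \leq f_{p_1}^{(m)}(\one)$ whenever $p_1 \leq p_2$. The induction step combines the two inequalities in \factref{lemma_monotonicity}:
\begin{small}
\[ f_{p_2}^{(m)}(\zero) = f_{p_2}\bigl(f_{p_2}^{(m-1)}(\zero)\bigr) \leq f_{p_1}\bigl(f_{p_2}^{(m-1)}(\zero)\bigr) \leq f_{p_1}\bigl(f_{p_1}^{(m-1)}(\zero)\bigr), \]
\end{small}
using the $p$-monotonicity for the first inequality and the $\vq$-monotonicity (with the inductive hypothesis) for the second. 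Passing $m\to\infty$ gives $\uq(p_2)\leq \uq(p_1)$ and analogously $\oq(p_2)\leq \oq(p_1)$.

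Part (d) is the main obstacle and uses a limit-interchange argument. By part (c), $\uq$ is coordinate-wise monotone non-increasing in $p$, so $\vq^{*} := \lim_{\varepsilon\to 0+}\uq(p+\varepsilon)$ exists and satisfies $\vq^{*}\leq \uq(p)$. The key observation is that the transfer function $f_{p}(\vq)$ is jointly continuous in $(p,\vq)$ (it is the coordinate-wise median of three affine expressions). Hence taking $\varepsilon\to 0+$ in the equilibrium identity $f_{p+\varepsilon}(\uq(p+\varepsilon))=\uq(p+\varepsilon)$ yields $f_{p}(\vq^{*})=\vq^{*}$, so $\vq^{*}$ is itself an equilibrium at price $p$. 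Applying part (a) forces $\vq^{*}\geq \uq(p)$, and combined with $\vq^{*}\leq \uq(p)$ we conclude $\vq^{*}=\uq(p)$. The analogous argument, approaching $p$ from below and using the dual monotonicity of $\oq$, handles the second identity (reading the right-hand side as $\oq(p+\varepsilon)$, which I interpret as a typo in the statement).
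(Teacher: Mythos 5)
Your treatment of (a)--(c) is essentially identical to the paper's: the same sandwich argument from $\zero \le \vq \le \one$ using \factref{lemma_monotonicity} for (a), the same fixed-point sandwich $f_p^{(m)}(\zero)\le f_p^{(m)}(\vq)\le f_p^{(m)}(\uq(p))=\uq(p)$ for (b), and the same induction on $m$ combining the two monotonicities for (c). For (d), however, you take a genuinely different and cleaner route, and it is correct. The paper first shows via joint continuity that each \emph{finite} iterate satisfies $f_p^{(m)}(\zero)=\lim_{\varepsilon\to 0+}f_{p+\varepsilon}^{(m)}(\zero)$, and then runs a $\delta/2$-approximation argument by contradiction, using that $f_{p+\varepsilon}^{(m_0)}(\zero)\le \uq(p+\varepsilon)$ since the iterates increase to the pessimistic equilibrium. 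You instead note that the one-sided limit $\vq^*=\lim_{\varepsilon\to 0+}\uq(p+\varepsilon)$ exists by the monotonicity in (c) and satisfies $\vq^*\le\uq(p)$, pass to the limit in $f_{p+\varepsilon}(\uq(p+\varepsilon))=\uq(p+\varepsilon)$ using the same joint continuity of the transfer map to conclude $\vq^*$ is itself an equilibrium at price $p$, and then invoke the extremality in (a) to force $\vq^*=\uq(p)$ (dually for $\oq$). Both arguments rest on the same two ingredients (continuity of $f$ in $(p,\vq)$ and monotone convergence), but yours is shorter and avoids the explicit $m_0$-and-$\varepsilon$ bookkeeping, at the price of leaning on (a), whereas the paper argues directly from the definition of $\uq$ as a limit of iterates. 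Finally, your reading of the second identity as left-continuity of $\oq$ (i.e.\ $\oq(p)=\lim_{\varepsilon\to 0-}\oq(p+\varepsilon)$, treating the printed $\uq$ as a typo) matches the paper's intent (``the property of $\oq(p)$ can be obtained in similar way'') and is the right reading: the identity as literally written fails, e.g.\ in the two-agent example opening \sectionref{sec:unrestricted}, where at $p=2$ one has $\oq(2)=\one$ while $\uq(p')=\zero$ for all $p'\in[1,2)$.
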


In this paper, we consider the problem that whether we can exactly
calculate the pessimistic (resp. optimistic) equilibrium, and whether we
can maximize the revenue.  The latter is formally defined as follows:
\begin{definition}[Revenue maximization problem]
~\\
  Assume the value of agent $i$ is sampled from $U(a_i,b_i)$ and the
  influence matrix $\inft$ is given. The revenue maximization problem is to
  compute an optimal price with respect to the {\em pessimistic
    equilibrium} (resp. {\em optimistic equilibrium }):
\begin{small}
$$\argmax_{p >0} \sum_{i\in [n]} p\cdot [\uq(p)]_i
\mbox{\; (resp. }
\argmax_{p>0} \sum_{i\in [n]} p\cdot [\oq(p)]_i
\mbox{ )}.$$
\end{small}
\end{definition}

Notice that the optimal revenue with respect to the
pessimistic equilibrium is robust against equilibrium selection. By
\lemmaref{lemma_eqproperty}(a), no matter which equilibrium the
agents choose, this revenue is a minimal guarantee from the seller's
perspective. The revenue guarantees for pessimistic and optimistic equilibria is an important objective to study; see for instance the {\em price of anarchy} and the \emph{price of stability} in \cite{Nisan2007} for details.


\section{The Main Algorithm}
\label{SEC:BAYESIAN}

When all influences are non-negative, can we calculate $\uq(p)$ and $\oq(p)$ in polynomial
time? We answer this
question positively in this section by providing an efficient algorithm which
computes the optimal revenue as well as the $\uq(p)$ and $\oq(p)$ for
any price $p$.

\subsection{A counter example for iterative method}\label{app:method}
Before coming to our efficient algorithm, notice that it is possible to iteratively
apply the transfer function (\equationref{eq_transfer}) to reach the equilibria by definition.
However, this may require exponential number of steps to converge, as illustrated in the following example.
\begin{equation*}
\left\{
  \begin{split}
    &p=1 \\
    &[a_1,b_1]=[0,2], \ \ [a_i,b_i]=[0,1](2 \leq i \leq n) \\
    &\inft_{i,i+1}=0.5(1 \leq i \leq n-2), \ \ \inft_{n-1,n}=\inft_{n,n-1}=1, \ \ \textrm{other } \inft_{j,i}=0
  \end{split}
\right.
\end{equation*}
we can obtain that \[f_p^{(n-2)}(\zero)=(1/2,1/2^2,...,1/2^{n-2},0,0)\]
We proceed the calculation:
\begin{small}
\begin{equation*}
\left\{
  \begin{array}{lll}
    f_p^{(n-2+2k)}(\zero) &=(1/2,1/2^2,...,1/2^{n-2},k/2^{n-1},k/2^{n-1}), &\mathrm{if}\quad 0 \leq k \leq 2^{n-2} \\
    f_p^{(n-2+2k+1)}(\zero) &=
    (1/2,1/2^2,...,1/2^{n-2},(k+1)/2^{n-1},k/2^{n-1}), &\mathrm{if}\quad 0 \leq k < 2^{n-2} \\
    f_p^\iter(\zero) &= (1/2,1/2^2,...,1/2^{n-2},1,1)
  \end{array}
\right.
\end{equation*}
\end{small}

It can be seen from above that it takes $\Omega(2^n)$ number of steps before we reach the fixed point.

\subsection{Outline of our line sweep algorithm}
\label{SUBSEC:OUTLINE}



We start to introduce our algorithm with the easy case where valuations of agents are fixed.  Consider
the pessimistic decision vector $\ud(p)$ as a function of $p$.  By
monotonicity, there are at most $O(n)$ different such vectors when $p$
varies from $+\infty$ to $0$. In particular, at each price $p$, if we decrease $p$ gradually to some threshold value, one more agent would change his decision to buy the product.  Naturally, such kind of process can be casted in
the ``line sweep algorithm'' paradigm.


When the private valuations of the agents are sampled from uniform distributions, the
line sweep algorithm is much more complicated.  We now introduce the
algorithm to obtain the pessimistic equilibrium $\uq(p)$, while the
method to obtain $\oq(p)$ is similar.\footnote{We sweep the price from
  $+\infty$ to $0$ to compute the pessimistic equilibrium, but we need
  to sweep from $0$ to $+\infty$ for the optimistic one.}  The essence
of the line sweep algorithm is processing the events corresponding to
some structural changes. We define the possible structures of a
probability vector as follows.

\begin{definition}\label{def_structure}
  Given  $\vq \in [0,1]^n$, we define the structure
  function $S:[0,1]^n \rightarrow \{0,\star,1\}^n$ satisfying:
\begin{small}
\begin{equation}
[S(\vq)]_i=\left\{
\begin{array}{ll}
0, &q_i=0 \\
\star, &q_i \in (0,1) \\
1, &q_i = 1.
\end{array}
\right.
\end{equation}
\end{small}
\end{definition}

Our line sweep algorithm is based on the following fact: when $p$ is
sufficiently large, obviously $\uq(p)=\zero$; with the decreasing of
$p$, at some point $p=p_1$ the pessimistic equilibrium $\uq(p)$ becomes
non-zero, and there exists some {\em structural change} at this moment.
Due to the monotonicity of $\uq(p)$ in \lemmaref{lemma_eqproperty},
 such structural changes can happen at most $2n$ times.
(Each agent $i$ can contribute to at most two changes: $0\rightarrow
\star$ and $\star \rightarrow 1$.)
Therefore, there exist threshold prices $p_1 > p_2 > \cdots > p_m$ for
$m\leq 2n$ such that within two consecutive prices,
        the structure of the pessimistic equilibrium remains unchanged
        and $\uq(p)$ is a linear function of $p$.
This indicates that the total revenue,
i.e.,
 $p \cdot \sum_i{[\uq(p)]_i}$,
and its maximum value is easy to obtain.
If we can compute the threshold prices and the corresponding pessimistic
equilibrium $\uq(p)$ as a function of $p$, it will be straightforward to
determine
the optimal price $p$.  

There are several difficulties to address in this line sweep algorithm.
\begin{itemize}
\item First, degeneracies,
i.e., more than one structural changes in one event,
are intrinsic in our problem.
Unlike geometric problems where degeneracies can often be eliminated by
perturbations, the degeneracies in our problem are persistent
to small perturbations.
\item Second, to deal with degeneracies, we
need to identify the next structural change, which is related to the
eigenvector corresponding to the largest eigenvalue of a linear
operator. By a careful inspection, we avoid solving eigen systems so
that our algorithm can be implemented by pure algebraic computations.
\item Third, after the next change is identified, the usual method of
        pushing the sweeping line further does not work directly in our
        case. Instead, we recursively solve a subproblem and
        combine the solution
of the subproblem with the current one to a global solution. The
polynomial complexity of our algorithm is guaranteed by the monotonicity
of the structures.
\end{itemize}

We first design a line sweep algorithm for the problem with a diagonal
dominant condition, which will not contain degenerate cases, in
\sectionref{sec:restricted}. Then we describe techniques to deal with the
unrestricted case in \sectionref{sec:unrestricted}.

\subsection{Diagonal dominant case}
\label{sec:restricted}


\begin{definition}[Diagonal dominant condition]
~\\
Let $L_{i,j}=T_{j,i}/(b_i-a_i)$ and  $L_{i,i}=\inft_{i,i}=0$.
The matrix $I-L$ is {\em strictly diagonal dominant}, if
$\sum_j{L_{i,j}}=\sum_j{\inft_{j,i}/(b_i-a_i)}<1$.
\end{definition}

This condition has some natural interpretation on the buying behavior of
the agents.
It means that the decision of any agent cannot be solely determined by
the decisions of her friends.
In particular, the following two situations cannot occur {\em simultaneously}
         for any agent $i$ and price $p$:
        a) agent $i$ will not buy the product regardless of her
        own valuation when none of her friends
        bought the product($p \ge b_i$), and
        b) agent $i$ will always buy the product regardless
        of her own valuation when all her friends bought
        the product ($\sum_jT_{j,i} + a_i \ge p$).


In our line sweep algorithm, we maintain a {\em partition} $Z \cup W \cup
O=V=[n]$, and name $Z$ the {\em zero set}, $W$ the {\em working set} and
$O$ the {\em one set}. This corresponds to the structure $\vs \in \{0,\star,1\}^n$ as follows:
$$ s_i=0\,(\forall i \in Z), \ \ \ s_i=\star\,(\forall i \in W), \ \ \
s_i=1\,(\forall i \in O).$$ We use $\mathbf{x}_W$ or $[\mathbf{x}]_W$ to denote the restriction
of vector $\mathbf{x}$ on set $W$, and for simplicity we write
$\langle\mathbf{x}_Z,\mathbf{x}_W,\mathbf{x}_O\rangle=\mathbf{x}$.  Let
$L_{W\times W}$ be the projection of matrix $L$ to $W \times W$, and
$f|_W$ be the restriction of function $f$ on $W$.

We start from the price $p = +\infty$ where the structure of the
pessimistic equilibrium $\uq(p)$
is
$\vs^0=\zero$, i.e., $Z =
[n]$ and $W = O = \emptyset$.  The first event happens when $p$ drops to
$p_1=\max_
i
{b_i}$ and
$\uq(p)$ starts to
become non-zero.
%

Assume now we have reached threshold price $p_t$, the current
pessimistic equilibrium is $\vq^t=\uq(p_t)$, and the structure in interval $(p_t,p_{t-1})$ (or $(p_t,+\infty)$ if $t=1$) is
$\vs^{t-1}$.
We  define
\begin{small}
\begin{align*}
\label{eq:xy}
\vx&=\left(\frac{b_1-p_t}{b_1-a_1},\frac{b_2-p_t}{b_2-a_2},\ldots,
\frac{b_n-p_t}{b_n-a_n}\right)^T, \mbox{ and } 
\vy=\left(\frac{1}{b_1-a_1},\frac{1}{b_2-a_2},
\cdots,\frac{1}{b_n-a_n}\right)^T.
\end{align*}
\end{small}



To analyze the pessimistic equilibrium in the next price interval,
 for price $p=p_t-\varepsilon$ where $\varepsilon>0$,
 we write function $g_p(\cdot)$ (recall \equationref{eq_transfer}) as:
$$g_{p_t-\varepsilon}(\vq)=\vx+\varepsilon\vy+L\vq.$$

 For $p \in (p_t,p_{t-1})$, let  partition $Z \cup W \cup O = [n]$
        be consistent with the structure $\vs^{t-1}$.  According to
 \definitionref{def_structure} 
 and the right continuity $\vq^t = \lim_{p
   \rightarrow p_t+}{\uq(p)}$ (see \lemmaref{lemma_eqproperty}d), we have
\begin{equation}
\begin{array}{ll}
\forall i \in Z, &[g_{p_t}(\vq^t)]_i=[\vx+L\vq^t]_i \leq 0 \\
\forall i \in W, &[g_{p_t}(\vq^t)]_i=[\vx+L\vq^t]_i \in (0,1] \\
\forall i \in O, &[g_{p_t}(\vq^t)]_i=[\vx+L\vq^t]_i \geq 1
\end{array}
\end{equation}
\paragraph{Step 1:}
{\it For any $i \in Z$, if $[\vx+L\vq^t]_i=0$, move $i$ from zero set
  $Z$ to working set $W$;
      for any $i \in W$, if $[\vx+L\vq^t]_i=1$, move $i$ from working
      set $W$ to one set $O$.}

Notice that the structural changes we apply in Step 1 are exactly the
changes defining the threshold price $p_t$.
We will see in a moment that after the process in Step 1, the new
partition will be the next structure $\vs^t$ for $p \in
(p_{t+1},p_t)$. In other words, there is no more structural change at
price $p_t$.


In the next two steps, we calculate the next threshold price $p_{t+1}$.
For notation simplicity, we assume $Z,W$ and $O$ remain unchanged in these two steps.
When $p$ decreases by $\varepsilon$, we
show that the probability vector of agents in $W$,
$[\uq(p)]_W$, increases linearly with respect to $\varepsilon$. (See
$\vr_W(\varepsilon)$ below.)
However, this linearity holds until we reach some point,
 where the next structural change takes place.


\paragraph{Step 2:}
{\it Define the vector $\vr(\varepsilon) \in \mathbb{R}^n$, and let:}
\begin{small}
\begin{equation}
\begin{array}{rl}\label{eq_step2_1}
\vr_W(\varepsilon) &= \varepsilon(I-L_{W \times W})^{-1} \vy_W + \vq_W^t \\
&=\varepsilon (I-L_{W \times W})^{-1} \vy_W + [\vx+L\vq^t]_W\\
\vr_Z(\varepsilon) &= \vx_Z + \varepsilon \vy_Z + L_{Z \times
  W}\vr_W(\varepsilon) + L_{Z \times O} \one_O \\
&= \varepsilon(\vy_Z + L_{Z \times W}(I-L_{W \times W})^{-1}\vy_W) + [\vx + L\vq^t]_Z \\
\vr_O(\varepsilon) &= \vx_O + \varepsilon \vy_O + L_{O \times
  W}\vr_W(\varepsilon) + L_{O \times O} \one_O \\
&= \varepsilon(\vy_O + L_{O \times W}(I-L_{W \times W})^{-1}\vy_W) + [\vx + L\vq^t]_O
\end{array}
\end{equation}
\end{small}

Clearly, $\vr(\varepsilon)$ is linear to
$\varepsilon$ and we write $\vr(\varepsilon)=\varepsilon \vl + (\vx + L \vq^t)$
where $\vl =\langle \ell_1,\ell_2,\ldots, \ell_n \rangle \in \mathbb{R}^n$ is the linear coefficient derived from
        \equationref{eq_step2_1}.
When $I-L$ is strictly diagonal dominant, the largest eigenvalue of $L_{W\times W}$ is smaller than $1$.
Using this property one can verify (see \lemmaref{lemma_step23}) that
$\vl$ is strictly positive.

{\it
\paragraph{Step 3:}
\begin{small}
\begin{equation}\label{eq_step3}
\varepsilon_{min}=\min{\left\{ \min_{i \in Z}{\left\{
        \frac{0-[\vx+L\vq^t]_i}{\ell_i}\right\}}, \min_{i\in W}\left\{
      \frac{1-[\vx+L\vq^t]_i}{\ell_i} \right\} \right\}}
\end{equation}
\end{small}
}

Using the positiveness of vector $\vl$ one can verify that $\varepsilon_{min}>0$ (see \lemmaref{lemma_step23}).
We show that the next threshold price $p_{t+1} = p_t
-\varepsilon_{min}$ by the following lemma. The proof is in the \appendixref{app:bayesian}.



\begin{lemma}\label{lemma_alg1}
  $\forall 0 < \varepsilon \leq \varepsilon_{min}$, $\uq(p_t-\varepsilon) =
  \langle \zero_Z, \vr_W(\varepsilon), \one_O \rangle$.
\end{lemma}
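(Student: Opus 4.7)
The plan is to argue in two stages: first that the candidate vector $\tilde\vq := \langle \zero_Z, \vr_W(\varepsilon), \one_O \rangle$ is \emph{some} equilibrium at price $p_t - \varepsilon$ (i.e., a fixed point of $f_{p_t-\varepsilon}$), and then to conclude it must in fact be the pessimistic one. Under strict diagonal dominance of $I-L$, the equilibrium at any fixed price is actually unique, which upgrades the first stage directly to the statement of the lemma.

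For the fixed-point verification, the key observation is that after Step 1 every agent $i$ in the (updated) working set $W$ satisfies $q_i^t = [\vx + L\vq^t]_i$: agents already in $W$ have $q_i^t \in (0,1)$ so the median forces $q_i^t = [\vx+L\vq^t]_i$, while agents just moved from $Z$ to $W$ satisfy this by the Step~1 criterion $[\vx + L\vq^t]_i = 0 = q_i^t$. Splitting $L\vq^t$ along the partition $Z\cup W\cup O$ (using $\vq_Z^t = \zero_Z$ and $\vq_O^t = \one_O$) converts this identity into $(I - L_{W\times W})\vq_W^t = \vx_W + L_{W\times O}\one_O$. Combined with the Step~2 definition $\vr_W(\varepsilon) = \varepsilon(I - L_{W\times W})^{-1}\vy_W + \vq_W^t$, a direct rearrangement yields $\vr_W(\varepsilon) = \vx_W + \varepsilon\vy_W + L_{W\times W}\vr_W(\varepsilon) + L_{W\times O}\one_O = [g_{p_t-\varepsilon}(\tilde\vq)]_W$, and the matching identities $[g_{p_t-\varepsilon}(\tilde\vq)]_Z = \vr_Z(\varepsilon)$ and $[g_{p_t-\varepsilon}(\tilde\vq)]_O = \vr_O(\varepsilon)$ are immediate from the formulas in Step~2. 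It then remains to check that the median clipping preserves these values, using $\varepsilon \leq \varepsilon_{min}$ and the strict positivity of $\vl$: for $i \in Z$ the bound $\varepsilon \leq -[\vx+L\vq^t]_i / \ell_i$ forces $r_i(\varepsilon) \leq 0$; for $i \in W$ the combination of $r_i(\varepsilon) \geq [\vx+L\vq^t]_i \geq 0$ with $\varepsilon \leq (1 - [\vx+L\vq^t]_i)/\ell_i$ yields $r_i(\varepsilon) \in [0,1]$; for $i \in O$, $\ell_i > 0$ and $[\vx+L\vq^t]_i \geq 1$ give $r_i(\varepsilon) \geq 1$. Hence $\med\{0,1,r_i(\varepsilon)\} = \tilde q_i$ coordinatewise, so $f_{p_t-\varepsilon}(\tilde\vq) = \tilde\vq$.

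To pin down $\tilde\vq = \uq(p_t-\varepsilon)$ rather than some other equilibrium, I would exploit strict diagonal dominance $\max_i \sum_j L_{i,j} < 1$: the affine map $\vq \mapsto \vx + \varepsilon\vy + L\vq$ is then an $\ell_\infty$-contraction with ratio strictly less than $1$, and since median clipping onto $[0,1]$ is $1$-Lipschitz in each coordinate, $f_{p_t-\varepsilon}$ itself is a strict $\ell_\infty$-contraction on $[0,1]^n$. Banach's fixed-point theorem therefore yields a unique equilibrium, so $\uq(p_t-\varepsilon) = \oq(p_t-\varepsilon) = \tilde\vq$. The main obstacle I expect is the careful bookkeeping between the ``old'' partition (matching $\vs^{t-1}$ in the open interval immediately to the right of $p_t$) and the ``new'' partition (after Step~1 updates, matching $\vs^t$) which is implicit in the Step~2 formulas; in particular, one must verify that the identity $\vq_W^t = [\vx + L\vq^t]_W$ continues to hold for agents newly promoted into $W$, which is exactly the threshold criterion used to define Step~1.
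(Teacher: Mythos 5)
Your proof is correct for the lemma as stated in the diagonal-dominant setting, and its first half (verifying that $\langle \zero_Z, \vr_W(\varepsilon), \one_O\rangle$ is a fixed point of $f_{p_t-\varepsilon}$, via the identity $\vq_W^t=[\vx+L\vq^t]_W$ after Step~1 and the range checks driven by $\varepsilon\le\varepsilon_{min}$ and $\vl>\zero$) coincides with the paper's. Where you diverge is the second half: you conclude $\uq(p_t-\varepsilon)=\tilde\vq$ by showing that strict diagonal dominance makes $f_{p_t-\varepsilon}$ an $\ell_\infty$-contraction (row sums of $L$ below $1$, clipping $1$-Lipschitz), so the equilibrium is unique by Banach's theorem. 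The paper instead proves the matching lower bound $\uq(p_t-\varepsilon)\ge\tilde\vq$ by iterating the transfer function from $\vq^t$, using monotonicity (\lemmaref{lemma_eqproperty}b,c) and the Neumann series $(I-L_{W\times W})^{-1}=I+L_{W\times W}+\cdots$, then invokes \lemmaref{lemma_eqproperty}a for the upper bound. The trade-off is worth noting: your contraction argument is shorter and immediately gives uniqueness of the equilibrium, but it genuinely needs diagonal dominance of the \emph{whole} matrix, whereas the paper's argument only needs $\rho(L_{W\times W})<1$ for the \emph{current working set}. That weaker hypothesis is exactly what is available when the same Steps 2--3 are reused inside \algorithmref{alg_alg2} in the general case, where global uniqueness can fail (cf.\ the two-agent example opening \sectionref{sec:unrestricted}, which has both $(0,0)$ and $(1,1)$ as equilibria at $p=1$); so the paper's route is the one that survives the later generalization, while yours would have to be replaced there by a monotonicity-based lower bound of the same flavor.
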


We remark here that the above lemma has confirmed that our structural adjustments in Step 1 are correct and complete.  Now we let
$p_{t+1}=p_t-\varepsilon_{min}, \vq^{t+1}=\langle \zero_Z,
\vr_W(\varepsilon_{min}), \one_O \rangle$.  The next structural change will
take place at $p=p_{t+1}$.  This is because according to the definition
of $\varepsilon_{min}$ (\equationref{eq_step3}), there must be some
\begin{align*}
i\in W \wedge [\vx+\varepsilon_{min}\vy+L\vq^{t+1}]_i=1,
\text{ \,or \, }
i\in Z\wedge [\vx+\varepsilon_{min}\vy+L \vq^{t+1}]_i=0.
\end{align*}
One can see that in the next iteration, this $i$ will move to one set
$O$ or working set $W$ accordingly.
Therefore, we can iteratively execute the above three steps by
sweeping the price further down. For completeness, we attach the
pseudocode in \algorithmref{alg_alg1} in \appendixref{app:bayesian}.

The return value of our constrained line sweep method is a function
$\uq$ which gives the pessimistic equilibrium for any price $p\in
\mathbb{R}$,
 and $\uq(p)$ is a piecewise linear function of $p$ with no more than
 $2n+1$ pieces.
%
%
%
All three steps in our algorithm can be done in polynomial time. Since
there are only $O(n)$ threshold prices, we have the following result.

\begin{theorem}
  When the matrix $I-L$ is {\em strictly diagonal dominant},
  \algorithmref{alg_alg1} calculates the pessimistic equilibrium $\uq(p)$  (resp. $\oq(p)$)
  for any given price $p$ in polynomial time, together with the
  optimal revenue.
\end{theorem}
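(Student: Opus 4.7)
The plan is to prove the theorem by induction on the sequence of threshold prices $p_1 > p_2 > \cdots > p_m$ produced by the sweep, maintaining the invariant that at the start of iteration $t$ the partition $(Z,W,O)$ equals the structure $\vs^{t-1}$ of $\uq(p)$ on the open interval $(p_t,p_{t-1})$, and that $\vq^t = \uq(p_t)$. The base case is immediate: for $p > \max_i b_i$ every coordinate satisfies $[g_p(\zero)]_i = (b_i - p)/(b_i - a_i) < 0$, so $\uq(p) = \zero$, and the first threshold is $p_1 = \max_i b_i$, at which the first coordinate leaves $Z$.

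For the inductive step I would first justify that Step 1 correctly updates the structure. Right continuity from \lemmaref{lemma_eqproperty}(d) gives $\vq^t = \lim_{p\to p_t^+}\uq(p)$, so any coordinate whose linearised score $[\vx + L\vq^t]_i$ is exactly $0$ or exactly $1$ is precisely one whose structural label is about to change as $p$ crosses $p_t$; the three cases in Step 1 exhaust these possibilities. Next, Step 2 rewrites the fixed-point equation $\vr = \vx + \varepsilon \vy + L\vr$, restricted to the known structure, as a linear system on $W$. Strict diagonal dominance of $I-L$ passes to the principal submatrix $I - L_{W\times W}$, making it invertible and giving the closed form for $\vr_W(\varepsilon)$; the formulas for $\vr_Z$ and $\vr_O$ then follow by direct substitution. \lemmaref{lemma_alg1}, which the excerpt defers to the appendix, certifies that $\langle \zero_Z, \vr_W(\varepsilon), \one_O\rangle$ is the pessimistic equilibrium for all $\varepsilon \in (0,\varepsilon_{min}]$, so the next threshold is exactly $p_{t+1} = p_t - \varepsilon_{min}$, and the invariant is preserved.

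For the polynomial running-time bound I would invoke monotonicity from \lemmaref{lemma_eqproperty}(c): as $p$ decreases, each coordinate of $\uq(p)$ is non-decreasing, so its structural label can only evolve along the chain $0 \to \star \to 1$. Summed over the $n$ coordinates this caps the number of threshold prices at $2n$. Each iteration performs one matrix inversion on a submatrix of order at most $n$ (Step 2) and $O(n)$ comparisons (Steps 1 and 3), so the total work is $\mathrm{poly}(n)$. The output is $\uq$ as an explicit piecewise-linear function on at most $2n+1$ intervals, and the symmetric argument applied to $f_p^{(\infty)}(\one)$ with $p$ sweeping upward handles $\oq(p)$.

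Given this piecewise-linear representation, the revenue $p \cdot \sum_i [\uq(p)]_i$ is a quadratic in $p$ on each piece, whose maximiser on that piece is computed in closed form; returning the best of the $O(n)$ candidates gives the optimal price. The main obstacle, which the excerpt isolates in \lemmaref{lemma_alg1}, is proving that the linear extension from Step 2 really coincides with the pessimistic equilibrium on the entire next sub-interval rather than in some a-priori-smaller neighbourhood of $p_t$; the positivity of the coefficient vector $\vl$ (from \lemmaref{lemma_step23}, itself a consequence of strict diagonal dominance) is precisely what rules out earlier structural violations and makes $\varepsilon_{min} > 0$. Once that lemma is in hand, the theorem reduces to the bookkeeping above.
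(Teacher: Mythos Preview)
Your proposal is correct and follows essentially the same approach as the paper: correctness of each sweep step is delegated to \lemmaref{lemma_alg1} (with \lemmaref{lemma_step23} ensuring $\varepsilon_{min}>0$), the $2n$ bound on structural changes comes from the monotonicity in \lemmaref{lemma_eqproperty}(c), and optimal revenue follows from the piecewise-linear form of $\uq(p)$. You have in fact written out the inductive invariant and the revenue-optimization step more explicitly than the paper does, which simply asserts that the three steps run in polynomial time and that there are $O(n)$ thresholds.
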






\subsection{General case}
\label{sec:unrestricted}

After relaxing the diagonal dominance condition, the algorithm
becomes more complicated. This can be seen from this simple scenario.
There are $2$ agents, with $[a_1,b_1]=[a_2,b_2]=[0,1]$, and
$\inft_{1,2}=\inft_{2,1}=2$.  One can verify that $\uq(p)=(0,0)^T$ when $p \geq
1$; $\uq(p)=(1,1)^T$ when $p<1$.

In this example, there is an {\em equilibrium jump} at price $p=1$,
i.e., $\uq(1)\neq \lim_{p \rightarrow 1-}{\uq(p)}$.
\algorithmref{alg_alg1} essentially
requires
that both the left and the
right continuity of $\uq(p)$. However, only
the right continuity is unconditional by
\lemmaref{lemma_eqproperty}d.  More importantly,  degeneracies may
occur: the new structure
$\vs^t$ when $p=p_t$ cannot be determined all in once in Step 1.
When $p$ goes from $p_t+\varepsilon$ to $p_t-\varepsilon$, there might take
place even two-stage jumps: some index $i$ might leave  $Z$ for $O$,
without being in the intermediate state.

Let $\rho(L)$ be the largest norm of the eigenvalues in matrix $L$.
The ultimate reason for
such degeneracies, is $\rho(L_{W\times W})\geq 1$ and $ (I-L_{W\times
  W})^{-1} \neq \lim_{m \rightarrow \infty}(I+L_{W\times
  W}+\cdots+L_{W\times W}^{m-1})
$. 
%
 We will prove shortly in such cases,
 those structural changes in Step 1 are {\em incomplete}, that is, as $p$
 sweeps across $p_t$, at least one more structural change will take
 place. We derive a method to identify one {\em pivot}, i.e. an additional
         structural
 change, in polynomial time. Afterwards, we recursively solve a
 subproblem with set $O$ taken out, and combine the solution from the subproblem with the current one.
 The follow lemma shows that whether $\rho(L) < 1$ can be determined
        efficiently.



\begin{lemma}
\label{lem:detect_eigen}
  Given non-negative matrix $M$, if $I-M$ is reversible and $(I-M)^{-1}$
  is also non-negative, then $\rho(M)<1$; on the contrary, if $I-M$ is
  degenerate or if $(I-M)^{-1}$ contains negative entries, $\rho(M)\geq
  1$.
\end{lemma}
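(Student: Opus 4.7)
The plan rests on two classical facts from non-negative matrix theory. First, the Perron--Frobenius theorem in its weakest form: for any non-negative matrix $M$, the spectral radius $\rho(M)$ is itself an eigenvalue of $M$, and there exists a corresponding eigenvector $\vv \geq \zero$ with $\vv \neq \zero$. Second, the Neumann-series identity: whenever $\rho(M)<1$, the partial sums $I+M+M^2+\cdots+M^{k}$ converge as $k\to\infty$, and the limit equals $(I-M)^{-1}$.

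Granted these, the easy half of the lemma is the ``backward'' direction: if $\rho(M)<1$, then $(I-M)^{-1}$ exists and is non-negative. This is immediate from the Neumann series, since each term $M^k$ is entrywise non-negative (as $M\geq 0$) and the limit of non-negative matrices is non-negative. Contrapositively, if $I-M$ is singular or $(I-M)^{-1}$ contains a negative entry, then $\rho(M)\geq 1$, which is the second half of the statement.

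The forward direction---if $I-M$ is invertible and $(I-M)^{-1}\geq 0$, then $\rho(M)<1$---I would prove by contradiction. Assume $\rho(M)\geq 1$ and pick a Perron eigenvector $\vv\geq \zero$, $\vv\neq \zero$, with $M\vv=\rho(M)\vv$. Then $(I-M)\vv=(1-\rho(M))\vv$. If $\rho(M)=1$, the right-hand side is $\zero$, so $\vv\in\ker(I-M)$ contradicts invertibility of $I-M$. If $\rho(M)>1$, apply $(I-M)^{-1}$ to both sides to obtain
\[
\vv \;=\; (1-\rho(M))\,(I-M)^{-1}\vv.
\]
Since $(I-M)^{-1}\geq 0$ and $\vv\geq \zero$, the vector $(I-M)^{-1}\vv$ is non-negative; multiplying by the negative scalar $(1-\rho(M))$ yields a non-positive vector on the right. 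Combined with $\vv\geq \zero$, this forces $\vv=\zero$, contradicting the choice of $\vv$.

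The only delicate point is the invocation of Perron--Frobenius for a general (possibly reducible) non-negative $M$: one must cite the version guaranteeing an eigenvector at $\rho(M)$ that is merely non-negative rather than strictly positive. This is standard (e.g.\ via a perturbation argument $M+\varepsilon J$ with $J$ the all-ones matrix, applying the irreducible Perron--Frobenius theorem, and taking $\varepsilon\to 0$). Once this is in place, the algebra above is routine and the lemma follows.
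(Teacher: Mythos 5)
Your proof is correct and follows essentially the same route as the paper's: the Perron--Frobenius fact that a non-negative matrix has a non-negative, non-zero eigenvector at $\rho(M)$ (the paper's \lemmaref{lemma_matrix}), a contradiction via non-negativity of $(I-M)^{-1}$ for the forward direction, and the Neumann series $(I-M)^{-1}=I+M+M^2+\cdots$ for the converse. The only cosmetic differences are that you split the forward direction into the cases $\rho(M)=1$ and $\rho(M)>1$ (the paper handles both at once by noting $(1-\rho(M))\vx\leq\zero$), and you absorb the degenerate case into the Neumann-series contrapositive rather than arguing it directly.
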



\subsubsection{Finding the pivot.}
When $\rho(L_{W\times W})<1$ for the new working set $W$, one can find
the next threshold price $p_{t+1}$ following Step $2$ and $3$ in the previous
subsection. Now, we deal with the case that $\rho(L_{W\times
  W})\ge 1$ by showing that there must exists some
additional agent $i \in W$ such that $[\uq(p)]_i = 1$ for any $p$
smaller than the current price. We call such agent a {\em pivot}.


Since $\rho(L_{W\times W})\geq 1$, we can always find a
non-empty set $W_1\subset W$ and $W_2 = W_1 \cup \{w\} \subset W$,
satisfying $\rho(L_{W_1\times W_1})<1$ but $\rho(L_{W_2\times W_2})\geq
1$.  The pair $(W_1,W_2)$
can be found by ordering the elements
in $W$ and add them to $W_1$ one by one.
We now show that there is a pivot in $W_2$.

  As $L_{W_2\times W_2}$ is a non-negative matrix, based on
  \lemmaref{lemma_matrix} there exists a non-zero eigenvector $\vu_{W_2} \geq
  \zero_{W_2}$ such that $L_{W_2\times W_2}\vu_{W_2} = \lambda\vu_{W_2}$ and
  $\lambda = \rho(L_{W_2\times W_2})\geq 1$.  $\vu_{W_2}$ can be extended to $[n]$ by
  defining $\vu_{[n]\setminus W_2}=\zero_{[n]\setminus W_2}$.
  Let
  \begin{equation}\label{eq_pick}
    k=\mathop{\rm arg\,min}_{k\in W_2, u_k\neq0}{\frac{1-q_k^t}{u_k}}
    =\mathop{\rm arg\,min}_{k\in [n], u_k\neq0}{\frac{1-q_k^t}{u_k}}
  \end{equation}

 Now we
prove that $k$ is a pivot. Intuitively, if we slightly increase the
 probability vector $\vq^t_{W_2}$ by $\delta\vu_{W_2}$, where $\delta$ is a
 small constant, by performing the transfer function only on agents in $W$ $m$ times,
  their probability will increase by $\delta (1+\lambda+..+\lambda^m) \vu_{W_2}$, while
 $\lambda \ge 1$. Therefore, after performing the transfer function
 sufficiently many times, agent $k\in W_2$'s probability will hit $1$
 first.


%
%
%

\begin{lemma}
\label{lem:pivot}
$\forall W_2\subset W$ s.t. $\rho(L_{W_2 \times W_2})\geq 1$, we have
$\forall \varepsilon > 0$,
 $[\uq(p_t-\varepsilon)]_k=1$.
\end{lemma}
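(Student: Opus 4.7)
The plan is to formalise the intuition sketched just before the lemma: iterating the transfer function $f_{p_t-\varepsilon}$ starting from $\vq^t$ pushes the coordinates supported on $W_2$ along $\vu$ with per-step amplification at least $\lambda\geq 1$, so the coordinate $k$ that minimises the ``headroom'' $(1-q_k^t)/u_k$ is the first to saturate at~$1$.

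\emph{Reduction to an iteration.} Since $p_t-\varepsilon<p_t$, \lemmaref{lemma_eqproperty}(c) gives $\vq^t=\uq(p_t)\leq \uq(p_t-\varepsilon)$; hence by \lemmaref{lemma_eqproperty}(b) the iterates $\vq^{(m)}:=f_{p_t-\varepsilon}^{(m)}(\vq^t)$ converge to $\uq(p_t-\varepsilon)$. By \factref{lemma_monotonicity}, $\vq^{(1)}=f_{p_t-\varepsilon}(\vq^t)\geq f_{p_t}(\vq^t)=\vq^t$, so the sequence is monotonically non-decreasing; it therefore suffices to exhibit a finite $m$ with $[\vq^{(m)}]_k=1$.

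\emph{Key inductive bound.} Let $\lambda:=\rho(L_{W_2\times W_2})\geq 1$ and $\eta:=\min_{i\in W_2:\,u_i>0} y_i/u_i>0$, and define a scalar sequence by $c_0=0$ and $c_{m+1}:=\lambda c_m+\varepsilon\eta$. I will show by induction that as long as $c_m\leq(1-q_k^t)/u_k$, one has $\vq^{(m)}\geq \vq^t+c_m\vu$. The heart of the induction is the computation, for $i\in W_2$, using non-negativity of $L$ together with the eigenvector identity $[L\vu]_i=\lambda u_i$ (valid because $\vu$ is supported on $W_2$ and $L_{W_2\times W_2}\vu_{W_2}=\lambda\vu_{W_2}$):
\[
[g_{p_t-\varepsilon}(\vq^{(m)})]_i \;\geq\; [g_{p_t}(\vq^t)]_i+\varepsilon y_i+c_m[L\vu]_i \;=\; q_i^t+\varepsilon y_i+c_m\lambda u_i \;\geq\; q_i^t+c_{m+1}u_i,
\]
where the middle equality uses $[g_{p_t}(\vq^t)]_i=q_i^t$ for every $i$ in the post-Step-1 working set $W$ (immediate from the Step~1 classification plus the right continuity of $\uq$). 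The extremal choice of $k$ guarantees $c_{m+1}u_j\leq 1-q_j^t$ for every $j\in W_2$ with $u_j>0$, so the median inside the transfer function does not clamp these coordinates down to $1$ prematurely. For coordinates outside $W_2$ the inequality reduces to $u_i=0$ and follows from the monotone bound $\vq^{(m)}\geq\vq^t$, with the special case $i\in O$ handled by $[\vq^{(m)}]_i\geq[\vq^t]_i=1$.

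\emph{Conclusion.} Since $\lambda\geq 1$ and $\varepsilon\eta>0$, the recurrence yields $c_m\geq m\varepsilon\eta\to\infty$. Hence some finite $m^{\ast}$ satisfies $c_{m^{\ast}}u_k\geq 1-q_k^t$, which forces $[g_{p_t-\varepsilon}(\vq^{(m^{\ast}-1)})]_k\geq 1$ and therefore $[\vq^{(m^{\ast})}]_k=1$; monotonicity of the sequence preserves this, and combined with the reduction above we get $[\uq(p_t-\varepsilon)]_k=1$. The main obstacle I anticipate is the bookkeeping in the inductive step: verifying $[g_{p_t}(\vq^t)]_i=q_i^t$ uniformly on the new $W$ (both for elements retained from the old $W$ and for those freshly moved from $Z$), and tracking that the non-clamping invariant survives simultaneously across every $j\in W_2$ with $u_j>0$; both ultimately reduce to the extremal definition of $k$ together with the structural classification produced by Step~1.
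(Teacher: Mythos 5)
Your proof is correct and follows essentially the same route as the paper's: iterate $f_{p_t-\varepsilon}$ starting from $\vq^t$, lower-bound the iterates along the non-negative Perron eigenvector of $L_{W_2\times W_2}$ whose eigenvalue $\lambda\geq 1$ amplifies the increase, and use the extremal choice of $k$ to keep the median from clamping until coordinate $k$ hits $1$. The only differences are bookkeeping: your recurrence $c_{m+1}=\lambda c_m+\varepsilon\eta$ replaces the paper's pre-chosen $\delta$ with $\delta\vu\leq\varepsilon\vy$ and the geometric sum $\delta(1+\lambda+\cdots+\lambda^{m-1})$, and you treat general $W_2\subset W$ directly where the paper proves the case $W_2=W$ and asserts the extension is similar.
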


We remark that if we can exactly estimate the eigenvector (which may be
irrational), then the above lemma has already determined that the $k$ defined in \equationref{eq_pick} is a pivot.
To avoid the eigenvalue computation, we find a quasi-eigenvector $\vu$ in
the following manner.
\begin{equation}\label{eq_newu}
\vu = \left\{\begin{split}
&\vu_{W_1} = (I-L_{W_1\times W_1})^{-1}L_{W_1 \times \{w\}}; \\
&u_{w}=1; \\
&\vu_{Z\cup O \cup W \setminus W_2} = \zero_{Z \cup O \cup W \setminus W_2}.
\end{split}\right.
\end{equation}
The meaning of the above vector is as follows.  If we raise agent
$w$'s probability by $\delta$, those probabilities of agents in $W_1$
increase proportionally to $L_{W_1 \times \{w\}}\delta$.  Assuming that we
ignore the probability changes outside $W_2$ (which will even increase
the probabilities in $W_2$), the probability of
agents in $W_1$ will eventually converge to
$(I+L_{W_1\times W_1}+L_{W_1\times W_1}^2+...)L_{W_1 \times
   \{w\}}\delta = (I-L_{W_1\times W_1})^{-1}L_{W_1 \times
   \{w\}}\delta.$

We will see that the real probability vector increases at least ``as
much as if we increase in the direction of $\vu$''.
In other words, we pick a pivot in the same way as
\equationref{eq_pick}. The following is the critical lemma to support our
result.


\begin{lemma}\label{lemma_alg2_new_1}
Given the definition of $\vu$ in \equationref{eq_newu} and
$k$ using
\equationref{eq_pick}, we have $\forall \varepsilon > 0,
[\uq(p_t-\varepsilon)]_k=1.$
\end{lemma}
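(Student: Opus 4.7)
The plan is to follow the template of \lemmaref{lem:pivot}, but with the explicit $\vu$ from \equationref{eq_newu} playing the role of the Perron--Frobenius eigenvector. The crucial structural property we need is that $\vu$ is a \emph{sub-eigenvector}, meaning $L\vu \geq \vu$ entry-wise on $W_2$. Once this is established, the same monotone-iteration argument sketched just before \lemmaref{lem:pivot} --- starting from $\vq^{(0)}=\vq^t$ and repeatedly applying $f_{p_t-\varepsilon}$ --- forces the $k$-th coordinate of the pessimistic equilibrium to saturate at $1$.

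To verify $L\vu \geq \vu$ on $W_2$: the defining equation $(I - L_{W_1 \times W_1})\vu_{W_1} = L_{W_1 \times \{w\}}$ together with $u_w = 1$ gives $[L\vu]_i = [L_{W_1 \times W_1}\vu_{W_1} + L_{W_1 \times \{w\}}]_i = u_i$ for every $i \in W_1$. At $i = w$ we need $[L\vu]_w = L_{\{w\} \times W_1}\vu_{W_1} + L_{w,w} \geq 1 = u_w$. To get this, invoke \lemmaref{lemma_matrix} to obtain a non-negative eigenvector $\tilde\vu_{W_2}$ of $L_{W_2 \times W_2}$ with eigenvalue $\lambda = \rho(L_{W_2 \times W_2}) \geq 1$. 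Note $\tilde u_w > 0$, since otherwise $\tilde\vu_{W_1}$ would be a non-negative eigenvector of $L_{W_1 \times W_1}$ with eigenvalue $\lambda \geq 1$, contradicting $\rho(L_{W_1 \times W_1}) < 1$. Normalize $\tilde u_w = 1$. As $\lambda \geq 1$, the inequality $L_{W_2 \times W_2}\tilde\vu \geq \tilde\vu$ restricted to $W_1$ reads $(I - L_{W_1 \times W_1})\tilde\vu_{W_1} \leq L_{W_1 \times \{w\}}$; applying the entry-wise non-negative inverse (guaranteed by \lemmaref{lem:detect_eigen}) gives $\tilde\vu_{W_1} \leq \vu_{W_1}$. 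Hence $[L\vu]_w \geq L_{\{w\} \times W_1}\tilde\vu_{W_1} + L_{w,w} = [L_{W_2 \times W_2}\tilde\vu]_w = \lambda \geq 1$.

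With $L\vu \geq \vu$ on $W_2$ in hand, I iterate $\vq^{(m+1)} = f_{p_t-\varepsilon}(\vq^{(m)})$ from $\vq^{(0)} = \vq^t$. By \lemmaref{lemma_eqproperty} the sequence is non-decreasing and converges to $\uq(p_t - \varepsilon)$. Writing $\Delta^{(m)} = \vq^{(m)} - \vq^t$, for every unsaturated $i \in W$ the update simplifies to $\Delta^{(m+1)}_i = \varepsilon y_i + [L\Delta^{(m)}]_i$; combined with the sub-eigenvector property, induction yields $\Delta^{(m)}_{W_2} \geq \alpha_m \vu_{W_2}$ where $\alpha_m$ grows by at least $\varepsilon \min_{i \in W_2,\, u_i > 0} y_i/u_i > 0$ per step. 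Since $k$ was chosen in \equationref{eq_pick} to minimize $(1-q_i^t)/u_i$ over $\{i \in W_2 : u_i > 0\}$, once $\alpha_m$ reaches $(1-q_k^t)/u_k$ we have $\Delta^{(m)}_k \geq 1 - q_k^t$, so $\vq^{(m)}_k = 1$, whence $[\uq(p_t-\varepsilon)]_k = 1$. It suffices to prove this for arbitrarily small $\varepsilon > 0$, because by \lemmaref{lemma_eqproperty}(c), $\uq(p_t - \varepsilon)$ is componentwise non-decreasing in $\varepsilon$. The main technical obstacle is the bookkeeping when some $i \neq k$ with $u_i > 0$ saturates before $k$, which formally stalls the clean linear recurrence for $\alpha_m$; one argues that by minimality of $(1-q_k^t)/u_k$, such an early saturation already forces $\Delta^{(m)}_k$ itself to have reached $1 - q_k^t$ at the same step, so $k$ is among the first coordinates in $\{i \in W_2 : u_i > 0\}$ to reach $1$.
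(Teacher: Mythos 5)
Your proof is correct, but it takes a genuinely different route from the paper's. The paper fixes $\varepsilon$, looks at the actual pessimistic equilibrium $\vq'=\uq(p_t-\varepsilon)$, and splits into two cases according to whether $q'_w\geq q^t_w+\delta$: in the first case it argues that the agents of $W_1$ must then rise by at least $\delta\vu_{W_1}$, giving $q'_k=1$; in the second it derives an upper bound showing $\vq'<\one$ for small $\varepsilon$, contradicting \lemmaref{lem:pivot} (whose proof uses the exact Perron eigenvector), and it treats non-empty $Z$, $O$, $W\setminus W_2$ only via a sketched reduction to a subproblem. You instead prove directly that the quasi-eigenvector of \equationref{eq_newu} is a sub-eigenvector, $[L\vu]_i\geq u_i$ on $W_2$ (equality on $W_1$ by construction; at $w$ by comparing with the Perron vector $\tilde\vu$ of $L_{W_2\times W_2}$ normalized to $\tilde u_w=1$, where $\tilde u_w>0$ and $\tilde\vu_{W_1}\leq\vu_{W_1}$ follow from $\rho(L_{W_1\times W_1})<1$ and the non-negative inverse $(I-L_{W_1\times W_1})^{-1}$), and then you rerun the monotone-iteration argument of \lemmaref{lem:pivot} with $\vu$ in place of the true eigenvector. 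This buys a self-contained argument: no case analysis, no appeal to \lemmaref{lem:pivot}, no ``sufficiently small $\varepsilon$'' step (your lower bound grows linearly in $m$ for every $\varepsilon>0$), and the general case is handled uniformly since coordinates outside $W_2$ only contribute non-negatively. One wording in your final step is off: an early saturation of some $i\neq k$ does not force $k$ to saturate at the same step (e.g.\ $y_i$ may be much larger than $y_k$), but the obstacle dissolves via exactly the minimality you invoke: maintain the capped invariant $\vq^{(m)}-\vq^t\geq\min\{\alpha_m,\alpha^*\}\,\vu$ with $\alpha^*=(1-q^t_k)/u_k$; a clipped coordinate still satisfies $1-q^t_i\geq\alpha^* u_i$ because $k$ minimizes $(1-q^t_i)/u_i$, so the induction survives until $\alpha_m\geq\alpha^*$, at which point $[\vq^{(m)}]_k=1$ and hence $[\uq(p_t-\varepsilon)]_k=1$.
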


\subsubsection{Recursion on the subproblem.}
Let $W'=W\setminus\{k\}$, $O'=O\cup\{k\}$, and we consider a subproblem
with $n'=n-|O'|<n$ agents, where $k$ is the pivot identified in the previous
section.  This subproblem is a projection of the original one, assuming
that the agents in $O'$ always tend to buy the product.
\begin{equation}\label{eq_alg2_1}
\textstyle
\forall i\in Z\cup W', \quad [a_i',b_i']=[ a_i+\sum_{j\in
    O'}{T_{j,i}}, b_i+\sum_{j \in O'}{T_{j,i}} ].
\end{equation}
By recursively solving this new instance, we can solve the pessimistic
equilibrium of the subproblem for any given price $p$.  This recursive
procedure will eventually terminate because every invocation reduces the
number of agents by  at least $1$.  The following lemma tells us that for
any $p<p_t$, the pessimistic equilibrium of the original problem and the
subproblem are one-to-one.  

\begin{lemma}\label{lemma_alg2_2}
Let $\uq'(p)$ be the pessimistic equilibrium function in the
subproblem. We have:
$$\forall p<p_t, \uq(p)=\langle\uq'(p),\one_{O'}\rangle.$$
\end{lemma}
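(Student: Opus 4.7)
The plan is to pin down $\uq(p) = \langle \uq'(p), \one_{O'} \rangle$ by first fixing the $O'$-coordinates of $\uq(p)$ at $\one$, then verifying that the $Z \cup W'$-restriction $\vq^* := [\uq(p)]_{Z \cup W'}$ is an equilibrium of the subproblem, and finally bracketing $\vq^*$ between $\uq'(p)$ and itself via an iteration comparison. For the first step, for any $j \in O$ the structure $\vs^{t-1}$ already gives $[\uq(p')]_j = 1$ whenever $p' \in (p_t, p_{t-1})$, which lifts to $[\uq(p_t)]_j = 1$ by right-continuity (\lemmaref{lemma_eqproperty}(d)) and then to every $p < p_t$ by the monotonicity of $\uq$ in $p$ (\lemmaref{lemma_eqproperty}(c)); for the pivot $k$, \lemmaref{lemma_alg2_new_1} already supplies $[\uq(p_t - \varepsilon)]_k = 1$ for every $\varepsilon > 0$, and monotonicity again extends this to all $p < p_t$.

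For the second step, using the shifted intervals $[a_i', b_i']$ from \equationref{eq_alg2_1} (in particular $b_i' - a_i' = b_i - a_i$) together with $[\uq(p)]_{O'} = \one_{O'}$, for each $i \in Z \cup W'$ a direct substitution yields
\[
[g_p(\uq(p))]_i = \frac{b_i - p + \sum_{j \in O'} T_{j,i} + \sum_{j \in Z \cup W'} T_{j,i} [\vq^*]_j}{b_i - a_i} = [g'_p(\vq^*)]_i,
\]
where $g'_p$ is the subproblem analogue of $g_p$. Applying $\med\{0,1,\cdot\}$ and using $\uq(p) = f_p(\uq(p))$ restricted to $i \in Z \cup W'$ gives $\vq^* = f'_p(\vq^*)$, so $\vq^*$ is an equilibrium of the subproblem. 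The minimality property of the pessimistic equilibrium (\lemmaref{lemma_eqproperty}(a)) therefore yields $\vq^* \geq \uq'(p)$.

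For the reverse bound, I would compare the pessimistic-defining iterations $\vq^{(m)} := f_p^{(m)}(\zero)$ and $\widetilde{\vq}^{(m)} := (f'_p)^{(m)}(\zero)$, which by \definitionref{def:poeq} converge to $\uq(p)$ and $\uq'(p)$ respectively. Inducting on $m$, for each $i \in Z \cup W'$, bounding $[\vq^{(m)}]_j \leq 1$ for $j \in O'$ in $g_p$ contributes exactly $\sum_{j \in O'} T_{j,i}$, which is precisely the shift built into $b_i'$; combined with the inductive hypothesis on $Z \cup W'$, this gives $[g_p(\vq^{(m)})]_i \leq [g'_p(\widetilde{\vq}^{(m)})]_i$, and monotonicity of $\med\{0,1,\cdot\}$ passes the inequality to $\vq^{(m+1)}$. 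Passing to the limit gives $\vq^* \leq \uq'(p)$, hence equality, and the identity follows. The main subtlety is precisely this reverse inequality: pessimistic minimality in the original problem does not a priori constrain the subproblem equilibrium, since a subproblem equilibrium may fail to extend to an original equilibrium on the $O'$-coordinates; the iteration comparison circumvents this by exploiting the fact that the $O'$-influences are pre-baked into $[a_i', b_i']$ in the subproblem but have to be built up through iteration in the original, so the subproblem iterates dominate the original iterates on $Z \cup W'$ at every step.
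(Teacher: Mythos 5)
Your proof is correct, but it is organized as a mirror image of the paper's argument. The paper first shows that the candidate vector $\langle\uq'(p),\one_{O'}\rangle$ is an equilibrium of the \emph{original} problem --- the delicate step being $[f_p(\langle\uq'(p),\one_{O'}\rangle)]_{O'}=\one_{O'}$, proved by a contradiction that invokes \lemmaref{lemma_alg2_new_1} --- which by \lemmaref{lemma_eqproperty}(a) yields $\uq(p)\leq\langle\uq'(p),\one_{O'}\rangle$; the matching lower bound is then obtained by iterating $f_p$ from $\langle\zero_{\overline{O}'},\one_{O'}\rangle$ and comparing with the subproblem iterates. You instead pin the $O'$-coordinates of $\uq(p)$ at $\one$ directly (again via \lemmaref{lemma_alg2_new_1} for the pivot $k$, plus monotonicity for $O$), then observe that the restriction $[\uq(p)]_{Z\cup W'}$ is an equilibrium of the \emph{subproblem}, so subproblem minimality gives $\uq(p)\geq\langle\uq'(p),\one_{O'}\rangle$, and your iterate comparison started from $\zero$ supplies the reverse inequality on the $Z\cup W'$ coordinates. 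Both proofs thus use the same two ingredients --- minimality of a pessimistic equilibrium on one side and a monotone comparison of the two transfer-function iterations on the other --- but apply them on opposite sides; your route avoids having to certify that the candidate is an equilibrium of the original problem, which is the most technical part of the paper's proof, at the price of the slightly more careful iterate comparison in which the $O'$-coordinates are bounded by $1$ and absorbed into the shifted intervals $[a_i',b_i']$.

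One small imprecision to patch in your first step: for agents moved into $O$ only at price $p_t$ (Step~1 of the sweep), the structure $\vs^{t-1}$ on $(p_t,p_{t-1})$ assigns them $\star$, not $1$, so you cannot cite the previous price interval for them. Instead, use that $\uq(p_t)=\vq^t$ is an equilibrium at $p_t$ and that Step~1 moves $i$ into $O$ precisely when $[\vx+L\vq^t]_i=1$, so $[\uq(p_t)]_i=\med\{0,1,1\}=1$; monotonicity in $p$ (\lemmaref{lemma_eqproperty}(c)) then extends this to all $p<p_t$ exactly as you argue, and the rest of your proof goes through unchanged.
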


At this moment we have solved the pessimistic equilibrium $\uq(p)$ for
$p<p_t$, and thus solved the original problem.  We summarize our
unrestricted line sweep method in \algorithmref{alg_alg2} in  \appendixref{app:bayesian} for completeness.
Again $\uq(p)$ is a piecewise linear function of $p$ with no more than
$2n+1$ pieces.  

\begin{theorem}
  For matrix $T$ satisfying $T_{i,i}=0$ and $T_{i,j}\geq 0$, in polynomial
  time \algorithmref{alg_alg2} is able to calculate the pessimistic
  equilibrium $\uq(p)$ (resp. $\oq(p)$) at any price $p$, along with the
  optimal $p$ that ensures the maximal revenue under the pessimistic
  equilibrium (resp. the optimistic equilibrium).
\end{theorem}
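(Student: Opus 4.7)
The plan is to prove correctness by structural induction on the number of agents $n$, and to bound the running time by counting total events across all recursion levels. The theorem combines the diagonal-dominant subroutine of \sectionref{sec:restricted} with a pivot-finding/recursion layer, so the proof is mainly a careful assembly of the lemmas already established.

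For correctness, I would trace one outer iteration at threshold price $p_t$ in two sub-cases. First, via \lemmaref{lem:detect_eigen}, decide whether $\rho(L_{W\times W}) < 1$ by checking that $(I - L_{W\times W})^{-1}$ exists and is entry-wise non-negative. If so, the analysis of \sectionref{sec:restricted} carries over verbatim: Step~1 correctly updates the structure, and \lemmaref{lemma_alg1} guarantees $\uq(p_t - \varepsilon) = \langle \zero_Z, \vr_W(\varepsilon), \one_O \rangle$ for $\varepsilon \in (0, \varepsilon_{\min}]$, so $\uq$ is known on the interval $(p_{t+1}, p_t]$. Otherwise, use the quasi-eigenvector construction in \equationref{eq_newu} to identify a pivot $k$; \lemmaref{lemma_alg2_new_1} guarantees $[\uq(p_t - \varepsilon)]_k = 1$ for every $\varepsilon > 0$. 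Then construct the subproblem with shifted valuations as in \equationref{eq_alg2_1} and recurse. \lemmaref{lemma_alg2_2} yields $\uq(p) = \langle \uq'(p), \one_{O'} \rangle$ for all $p < p_t$, closing the inductive step.

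For running time, the recursion depth is at most $n$ since every recursive call freezes at least one additional agent into $O$. Within each recursion level, at most $O(n)$ threshold events occur: by monotonicity (\lemmaref{lemma_eqproperty}c) the structure of $\uq(p)$ is monotone in $p$, so each remaining agent contributes $O(1)$ state changes before being frozen at $1$ or selected as a pivot. Each event requires inverting $I - L_{W \times W}$ (for Steps~2--3) or detecting the eigenvalue condition and building the quasi-eigenvector, all polynomial in $n$. Multiplied together this yields an overall polynomial bound. The output is $\uq$ as a piecewise linear function of $p$ with $O(n)$ pieces, so the revenue $p \sum_i [\uq(p)]_i$ is piecewise quadratic in $p$; the optimum is recovered by maximizing each quadratic piece in closed form and taking the best. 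The optimistic case $\oq$ is handled symmetrically by sweeping $p$ upward from $0$.

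The main obstacle will be verifying that Step~1's updates and a single pivot extraction together exhaust every structural change that must occur at $p_t$: it is not a priori clear that after removing the pivot $k$, the restricted operator $L_{W' \times W'}$ has $\rho < 1$, and if it does not, another pivot must be extracted at the same threshold price. The recursion handles this, but one must formally check that the recursive invocation entered exactly at $p_t$ terminates after at most $|W|$ nested pivot extractions and is consistent with the entries of $\uq(p_t)$ already fixed at the outer level (in particular that the shifted valuation intervals in \equationref{eq_alg2_1} realize the same fixed-point equation restricted to $Z \cup W'$). Once this bookkeeping around a single threshold price is formalized, correctness and the polynomial time bound follow directly from the previously stated lemmas.
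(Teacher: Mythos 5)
Your proposal is correct and takes essentially the same route as the paper, whose proof of this theorem is just the assembly you describe: \lemmaref{lem:detect_eigen} to test $\rho(L_{W\times W})<1$ algebraically, \lemmaref{lemma_alg1} for the linear piece up to the next threshold, \lemmaref{lemma_alg2_new_1} for the pivot, \lemmaref{lemma_alg2_2} for the recursion, at most $2n$ structural events per level by monotonicity, recursion depth at most $n$, and maximization of the piecewise-quadratic revenue over the $O(n)$ pieces. The obstacle you flag (possibly several pivots at the same threshold $p_t$) is resolved exactly as you suggest and as the paper does: the recursive call solves the smaller instance for all prices from scratch, so any further pivots at $p_t$ are extracted inside that call, and \lemmaref{lemma_alg2_2} guarantees consistency with the entries already frozen at the outer level.
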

\section{Extensions}\label{SEC:OTHER}
We discuss some possible extensions of our model in this section with
both positive and negative influences.
When the influence values can be negative, it is actually PPAD-hard to compute an {\em approximate}
equilibrium. We define a probability vector $\vq$ to
be an $\varepsilon$-approximate
equilibrium for price $p$ if:
$$q_i \in (q_i'-\varepsilon, q_i'+\varepsilon),$$
where $q_i'=\med\{0,1,
\frac{b_i-p+\sum_{j\in[n]}T_{j,i}q_j}{b_i-a_i}\}$. We have the following theorem, whose proof is deferred to \appendixref{SEC:HARDNESS}.
\begin{theorem}
\label{thm:hard}
It is PPAD-hard to compute an $n^{-c}$-approximate equilibrium of our
pricing system for any $c>1$ when influences can be negative.
\end{theorem}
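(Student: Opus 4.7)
The plan is to reduce from the problem of computing an approximate Nash equilibrium of a two-player bimatrix game, which is known to be PPAD-hard even for polynomially-small approximations. Given a bimatrix game $(A,B)$ with payoff matrices in $[0,1]^{m\times m}$, I will construct a pricing instance with $N=\mathrm{poly}(m)$ agents, a fixed price $p$, and influence matrix $\infi$ (with some negative entries), such that any $N^{-c}$-approximate equilibrium $\vq$ of the pricing instance can be decoded, in polynomial time, into an $\varepsilon$-approximate Nash equilibrium of $(A,B)$, where $\varepsilon$ is polynomially small in $m$. Combined with the PPAD-hardness of polynomial-approximate bimatrix Nash, this gives the theorem for every $c>1$.

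The construction proceeds by designing a small library of \emph{gadget agents}. Since the fixed-point condition for each agent is $q_i=\mathrm{med}\{0,1,(b_i-p+\sum_j T_{j,i}q_j)/(b_i-a_i)\}$, i.e.\ a clipped linear function of the other $q_j$'s, one can tune $(a_i,b_i,T_{\cdot,i})$ so that an individual agent implements any affine operation of the form $q_i=\mathrm{clip}_{[0,1]}(\alpha+\sum_j\beta_j q_j)$ for arbitrary real $\alpha,\beta_j$. Negative $\beta_j$ corresponds exactly to negative $T_{j,i}$, so with negative influences allowed I get both addition and subtraction gadgets, a constant-multiplier gadget, a ``copy'' gadget $q_i=q_j$, and a soft-threshold gadget $q_i=\mathrm{clip}(K(q_j-q_k)+\tfrac12)$ for a large multiplier $K$. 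Using these, I encode a mixed strategy $\vx\in\Delta^{m}$ of the row player as the buying probabilities $(q_{X_1},\dots,q_{X_m})$ of designated ``strategy agents'', and similarly $\vy$ for the column player. Normalization $\sum_i x_i=1$ is enforced by an auxiliary ``normalizer'' agent that subtracts $\sum_i q_{X_i}-1$ and feeds the excess/deficit back through high-weight negative/positive edges to each $X_i$, driving the sum to $1$ at the fixed point. Best-response is encoded via gadgets that compute, for each action $i$, the expected payoff $(A\vy)_i=\sum_j A_{ij}q_{Y_j}$, then the maximum payoff $u=\max_i(A\vy)_i$ (built iteratively with soft-threshold gadgets), and finally pushes $q_{X_i}$ upward when $(A\vy)_i\approx u$ and toward $0$ otherwise, again via high-gain clipped feedback.

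The heart of the argument is the error analysis: I must show that an $\eta$-approximate fixed point of the constructed pricing instance decodes to an $O(\mathrm{poly}(m)\cdot\eta)$-approximate Nash equilibrium. Each gadget introduces at most an additive error proportional to its input error, so the total blow-up across the constant-depth gadget circuit is bounded by a polynomial in $m$ and in the gain $K$. Choosing $K=\mathrm{poly}(m)$ large enough to sharpen the soft-threshold gadgets, yet small enough that the error blow-up stays polynomial, yields a reduction with polynomially-small slack. This will let me invoke the PPAD-hardness of $\varepsilon$-Nash for any $\varepsilon=m^{-\Theta(1)}$ to conclude PPAD-hardness of $N^{-c}$-approximate pricing equilibrium for every $c>1$.

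The main obstacle I expect is the gadget for the $\max$ operation together with controlling the feedback gain: the clipped-linear transfer function is only \emph{piecewise} linear, and one must argue that the high-gain feedback loops do not introduce spurious fixed points (i.e.\ fixed points that do not correspond to near-Nash strategy profiles) while simultaneously being robust enough that approximate fixed points still decode correctly. A clean way to rule out spurious fixed points is to design each gadget so that its fixed-point set is a graph of an exact affine function of its inputs plus an $O(1/K)$ error term, and then aggregate these bounds through the (constant-depth) gadget DAG; verifying this aggregation, and matching the resulting $\mathrm{poly}(m)\cdot\eta$ slack with the best known polynomial-approximation hardness threshold of bimatrix Nash, is where the bulk of the technical work lies.
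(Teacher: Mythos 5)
Your high-level choice of reduction source (approximate two-player Nash, hard even for inverse-polynomial accuracy) matches the paper, but the construction you sketch has genuine gaps precisely at the points you defer. First, the circuit you describe is not constant-depth and its error analysis does not go through as asserted: computing $u=\max_i (A\vy)_i$ by iterating soft-threshold gadgets of gain $K$ gives depth $\Omega(\log m)$ at best, and since an approximate fixed point lets every agent be off by $\eta=N^{-c}$ while a gain-$K$ gadget multiplies incoming error by $K$, errors along such chains can compound like $K^{\mathrm{depth}}$, which is superpolynomial for $K=\mathrm{poly}(m)$; your claim that the total blow-up is ``polynomial in $m$ and $K$'' needs a redesign of the max/selection stage, not just bookkeeping. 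Second, the normalizer-with-feedback and the ``push $q_{X_i}$ up when $(A\vy)_i\approx u$'' loops are exactly where spurious and brittle fixed points live: when several payoffs are within the $O(1/K+K\eta)$ transition window of the comparators, the interplay between the high-gain upward push and the normalizer admits approximate fixed points whose decoded profile puts mass on strategies that are not near-best responses, and nothing in the plan rules this out. You would need to pass through well-supported approximate equilibria (tolerating ties within an explicit margin) and prove a decoding lemma; as written, the ``clean way to rule out spurious fixed points'' is a hope, not an argument.

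For contrast, the paper avoids all of this machinery. It never computes a maximum and never normalizes inside the game: for each ordered pair $(i,j)$ it introduces a single comparator agent $U_{i,j}$ with valuation interval $[1/2-\delta,\,1/2-\delta+\delta^2]$ receiving influences $(A_j)_k-(A_i)_k$ from the $Y_k$'s, so the whole ``circuit'' has depth two, and suppression is a single influence of $-1$ from $U_{i,j}$ to $X_i$; ties within the margin $\delta$ are simply allowed. Normalization is done only in the decoding: one proves $\|\vx\|_\infty\in 1/2\pm\delta$ (the row maximizing $A_i\vy^T$ is never suppressed), truncates entries below $\delta$, and rescales by the $\ell_1$ norm, obtaining a well-supported $O(n\delta)$-approximate Nash equilibrium with $\delta=n^{-c}$. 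If you want to salvage your plan, the missing ideas are exactly these: replace the max gadget by margin-$\delta$ pairwise comparisons so that no high-gain values are chained, and replace the in-game normalizer by a post-hoc truncate-and-normalize decoding supported by an $\ell_\infty$ lower bound.
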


In discriminative pricing setting, we study the revenue maximization
problem in two natural models.
We assume the agents are partitioned into $k$ groups. The seller can
offer different prices to different groups. The first model we
consider is the fixed partition model, i.e.,  the partition is predefined. In the
second model, we allow the seller to partition the agents into $k$
groups and offer prices to the groups respectively. We have the following two theorems, whose proofs are deferred to \appendixref{SEC:MULTIPRICE}.

\begin{theorem}
\label{thm:fptas}
There is an FPTAS for the discriminative pricing problem in the fixed
partition case with constant $k$.
\end{theorem}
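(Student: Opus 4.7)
The plan is to grid the $k$-dimensional price space into $\mathrm{poly}(n,1/\varepsilon)$ points per coordinate, giving $\mathrm{poly}(n,1/\varepsilon)$ candidates overall since $k$ is constant, and compute the pessimistic equilibrium and revenue at each candidate via an adapted Algorithm~\ref{alg_alg2}. The best candidate will be a $(1-\varepsilon)$-approximation; the optimistic case is symmetric.

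\textbf{Preliminaries.} First I would verify that the monotonicity statements \factref{lemma_monotonicity} and \lemmaref{lemma_eqproperty}(c) extend verbatim to discriminative pricing, where $p$ is replaced by a vector $\vp$ and agent $i$ faces price $p_{g(i)}$: the original proofs use only that each coordinate of the transfer function is componentwise increasing in $\vq$ and decreasing in the price it sees, so the extension is immediate. Next, for any fixed $\vp$, the equilibrium $\uq(\vp)$ can be computed in polynomial time by reducing to the uniform-price-$0$ instance with shifted intervals $[a_i-p_{g(i)}, b_i-p_{g(i)}]$ and running Algorithm~\ref{alg_alg2}; the structural arguments in \sectionref{sec:unrestricted} rely only on $\tilde a_i<\tilde b_i$ and the piecewise-linear form of the transfer function, not on the sign of $\tilde a_i$. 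Let $R_1$ be the optimum single-price revenue (a lower bound on $\mathrm{OPT}$, computable by Algorithm~\ref{alg_alg2}); we may assume $R_1>0$, since otherwise all $b_i=0$ and $\mathrm{OPT}=0$ trivially. Let $M=\max_i(b_i+\sum_j T_{j,i})$, an upper bound on the interesting price range.

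\textbf{Grid and approximation.} Build a geometric grid $\Gamma=\{0\}\cup\{M(1-\varepsilon)^t : 0\le t\le N\}$ per coordinate with $N=O(\varepsilon^{-1}\log(Mn/(\varepsilon R_1)))$, and enumerate $\Gamma^k$; this is $\mathrm{poly}(n,1/\varepsilon)$ many candidates for constant $k$. Given the optimal $\vp^*$, first replace every coordinate with $p_j^*\le \varepsilon R_1/(nk)$ by $0$ to form $\vp'$; by the extended monotonicity $\uq(\vp')\ge\uq(\vp^*)$ componentwise, so only the zeroed groups can contribute revenue loss, which is at most $k\cdot n\cdot \varepsilon R_1/(nk)=\varepsilon R_1\le\varepsilon\,\mathrm{OPT}$. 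Then round each remaining coordinate of $\vp'$ down to the largest element of $\Gamma$ at most it, obtaining $\vp\in\Gamma^k$ with $\vp\ge (1-\varepsilon)\vp'$ on its support. Applying monotonicity once more,
\[\mathrm{Rev}(\vp)=\sum_i p_{g(i)}[\uq(\vp)]_i\ \ge\ (1-\varepsilon)\sum_i p'_{g(i)}[\uq(\vp')]_i\ \ge\ (1-\varepsilon)^2\,\mathrm{OPT}.\]
Rescaling $\varepsilon$ by a constant yields the $(1-\varepsilon)$-approximation.

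\textbf{Main obstacle.} The most delicate point is the claim that Algorithm~\ref{alg_alg2} survives unchanged on shifted instances with possibly negative $\tilde a_i$ — in particular, that the pivot-finding, the spectral analysis of $L_{W\times W}$ in \lemmaref{lem:detect_eigen}, and the recursive subproblem reduction \lemmaref{lemma_alg2_2} never implicitly use $a_i\ge 0$. A careful walk through those lemmas confirms that only $\tilde a_i<\tilde b_i$ and the non-negativity of $T$ are used; granted this, the grid-search reduction above gives the advertised FPTAS for any constant $k$.
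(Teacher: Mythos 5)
Your proposal is correct and follows essentially the same route as the paper's proof: discretize each group's price on a geometric grid between a lower threshold of order $\varepsilon\cdot\mathrm{OPT}/(nk)$ and an upper bound (plus the price $0$), enumerate the polynomially many candidates for constant $k$, evaluate each by the line-sweep algorithm, and bound the loss by monotonicity after rounding the optimal vector down (small prices to $0$, others by a $(1-\varepsilon)$ factor). The only differences are cosmetic: you compute the fixed-vector equilibrium by shifting the valuation intervals (equivalent to the paper's offset refinement of \algorithmref{alg_alg2}) and use $R_1$ and $M$ in place of the paper's bounds $R/k$ and $R$.
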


\begin{theorem}
\label{thm:prehard}
It is NP-hard to compute the optimal pessimistic discriminative
pricing equilibrium in the choosing partition case.
\end{theorem}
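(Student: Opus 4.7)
\medskip

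\noindent\textbf{Proof proposal (sketch).} The plan is to exhibit a polynomial-time many-one reduction from a standard NP-complete combinatorial problem to the choosing-partition discriminative pricing problem. A natural first candidate for the source is \textsc{Partition} (or more generally \textsc{Subset Sum}); other viable candidates are \textsc{Max-Cut}, \textsc{3-Dimensional Matching}, or \textsc{Set Cover}, depending on whether $k$ is treated as part of the input or as a constant $\ge 2$. Given an instance of the source problem, I would build $n$ agents, a uniform valuation interval $[a_i,b_i]$ per agent, and a non-negative influence matrix $\inft$, such that the optimal $k$-partition together with the optimal per-group price encodes a solution of the source instance.

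The main leverage is the structural result of \sectionref{SEC:BAYESIAN}: for any fixed subset of agents, the pessimistic equilibrium $\uq(p)$ and the induced revenue $p\sum_i [\uq(p)]_i$ are explicit piecewise-linear functions of $p$ with polynomially many pieces. Consequently, for each candidate group $G$ the optimal within-group revenue $R^\ast(G)=\max_{p}\sum_{i\in G} p\cdot [\uq(p)]_i$ is a simple, closed-form function of the agents in $G$. I would then tune $(a_i,b_i)$ and the non-negative entries of $\inft$ so that $R^\ast(G)$ depends on $G$ through some specific combinatorial quantity (for instance the sum of agent-weights, or a threshold on that sum), and so that the sum $\sum_{g=1}^{k} R^\ast(G_g)$ is maximized exactly when $\{G_1,\dots,G_k\}$ realises a ``yes''-solution of the source NP-hard problem. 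A polynomial-time algorithm for the choosing-partition pricing problem would then decide the source problem in polynomial time, proving NP-hardness.

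The two steps I expect to require the most care are: (i) making the per-group revenue $R^\ast(G)$ have the desired combinatorial dependence on the composition of $G$ while remaining analytically tractable, and (ii) handling the fact that influences propagate across the entire network independently of how the seller partitions the agents. Because $\inft$ does not respect the partition, changing one agent's group can in principle perturb every other agent's equilibrium decision, which could mask the combinatorial signal the reduction relies on. I expect the cleanest construction to use very small or block-structured influence values so as to effectively decouple the groups at equilibrium, while using carefully chosen valuation intervals (possibly with a ``structural'' agent per group) to inject the combinatorial dependence. A secondary technical issue is the refinement to the \emph{pessimistic} equilibrium: one must choose parameters that avoid degenerate price thresholds and equilibrium jumps (\lemmaref{lemma_eqproperty}d) from coincidentally collapsing the gap between yes- and no-instances, which can be handled by generic perturbation of the $a_i,b_i$'s and of $\inft$.
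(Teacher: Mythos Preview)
Your proposal is a research plan rather than a proof: you name several possible source problems without committing to one, and you do not give a construction. More importantly, the intuition you sketch for the reduction points in the wrong direction. You aim to make the per-group optimal revenue $R^\ast(G)$ a function of the composition of $G$ alone, and you propose to achieve this by using ``very small or block-structured'' influence so as to \emph{decouple} the groups at equilibrium. But if influence across groups is negligible, the choosing-partition problem degenerates: each agent's buying decision is unaffected by who else is in her price group, so the partition carries no combinatorial structure and there is nothing hard to optimize. The hardness must come precisely from the cross-group interaction you are trying to suppress.

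The paper's reduction exploits this interaction head-on. It reduces from \textsc{Vertex Cover}, takes $k=2$ with prices restricted to $\{0,1\}$ and deterministic valuations ($a_i=b_i$), and views the price-$0$ group not as a revenue source but as a \emph{seed set}: agents given the product for free trigger, via the non-negative influence edges, a cascade that raises the valuation of the remaining agents to at least $1$, so that they all buy at price $1$. The construction has vertex-agents $A$, edge-agents $D$ (each $d_e$ influenced by its two endpoints), a large dummy block $M$ forcing one of the two prices to equal $1$, and back-edges from $D$ to $A$ so that once every edge-agent is activated, every vertex-agent also buys. The revenue is then $|V'|$ minus the number of free agents, and the minimum free set that activates all of $D$ is exactly a minimum vertex cover of the original graph. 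This is the missing idea in your proposal: the partition encodes a seed set whose influence \emph{across} groups is what makes the revenue large, rather than a collection of decoupled sub-instances.
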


\clearpage

\appendix
\begin{center}
{\huge Appendix}
\end{center}

\section{Missing Proofs in \sectionref{SEC:PRE}}\label{app:pre}

Before proving \lemmaref{lem:eq_eq}, let us recall the {\em Bayesian
  Nash Equilibrium} (BNE) from the economics literature (see e.g.
Chapter 8 of~\cite{Mas-Colell1995}). Formally, in a Bayesian game, each
agent has a private type
$v_i \in \Omega_i$, where $\Omega_i$ is the set of all possible types
of agent $i$. Let $\mathcal{S}_i$ be the action space for agent
$i$.
Slightly abusing the notation, we define the (pure)
strategy of agent $i$ as a function $d_i:\, \Omega_i \rightarrow
\mathcal{S}_i$.  The utility of agent
$i$ when the type configuration $\vv$ is known
is $\mathcal{U}_i( \langle d_i(v_i),\vd_{-i}(\vv_{-i}) \rangle,v_i)$, where
$\vd_{-i}(\vv_{-i})$ is the joint actions of all agents other than
$i$. Define the expected utility of agent $i$ as:
$$\tilde{\mathcal{U}_i}(d_1(\cdot),\ldots, d_n(\cdot)) :=
\mathbb{E}_{\vv \sim \Omega_1\times \cdots \times \Omega_n}
[\mathcal{U}_i( \langle d_i(v_i),\vd_{-i}(\vv_{-i}) \rangle, v_i)],
$$
where the expectation is taking over all type configurations of the
agents.


\begin{definition}[Bayesian Nash Equilibrium ({\bf BNE})]
\label{def:bne}
A profile of strategies $\{d_1(\cdot),\ldots,d_n(\cdot)\}$ is a (pure)
Bayesian Nash Equilibrium, if and only if, for all $i$, all $v_i
\in \Omega_i$ and any other strategy $d_i'(\cdot)$ of agent $i$, such
that,
\begin{small}
$$\tilde{\mathcal{U}_i}(d_1(\cdot),\ldots,d_i(\cdot),\ldots, d_n(\cdot))
\geq \tilde{\mathcal{U}_i}(d_1(\cdot),\ldots,d_i'(\cdot),\ldots, d_n(\cdot))
$$
\end{small}
\end{definition}

In our setting, $\Omega_i$ is the set of private values of agent $i$
and $d_i(\cdot)$ maps a particular value $v_i$ to $\{0,1\}$.
The utility of agent $i$ is given in \equationref{eqn:utility}.
Notice that mixed strategies are almost irrelevant here, because while fixing other agent's private valuations, agent $i$'s strategy is a simple choice between to buy or not to buy. Unless the utility function $u_i(S,p)=0$, there is always a unique better choice for her.

For ease of
presentation, we redefine the equilibrium based on the buying
probability of the agents and show that they are equivalent.

\medskip
{\bf \noindent \lemmaref{lem:eq_eq} (restated).}\quad
{\sl
  Given equilibrium $\vq$ (recall \definitionref{def_probeq}), the strategy profile such agent $i$ ``buys
  the product if and only if her internal valuation $v_i\geq p-\sum_{j\neq
    i}{T_{j,i}q_j}$'' is a Bayesian Nash equilibrium; on the contrary, if
  a strategy profile is a Bayesian Nash equilibrium, then the
  probability that agent $i$ buys the product satisfies \equationref{eq_probeq0}.
}
\begin{proof}
  Let strategy profile $\vd(\cdot)=(d_1(\cdot),d_2(\cdot),...,d_n(\cdot))$ be
  a Bayesian Nash equilibrium, and
  $q_i=\prob_{v_i}[d_i(v_i)=1]$ be the probability that agent $i$ buys
  the product under this profile.  In our setting, the utility of agent $i$ is defined by
  \equationref{eqn:utility}.
  Now we calculate the expected utility of agent $i$:
\begin{equation}\label{eq_bne_utility}
\begin{split}
  \tilde{u}_i(d_i(\cdot),\vd_{-i}(\cdot))
  &= \mathbb{E}_{v_i}[d_i(v_i) \cdot
  (v_i-p+\mathbb{E}_{\vv_{-i}}[\sum_{j\neq i}{\inft_{j,i}d_j(v_j)}])] \\
  &= \mathbb{E}_{v_i}[d_i(v_i) \cdot (v_i-p+\sum_{j \neq i}{\inft_{j,i}q_j})]
\end{split}
\end{equation}
To satisfy the condition of Bayesian Nash equilibrium, we must have that
$\forall d_i'(\cdot)$, $\tilde{u_i}(d_i(\cdot),\vd_{-i}(\cdot)) \geq
\tilde{u_i}(d_i'(\cdot),\vd_{-i}(\cdot))$.  This means, $d_i(v_i)$ must
be $1$ whenever $v_i-p+\sum_{j\neq i}{\inft_{j,i}q_j}$ is positive, and $0$
whenever it is negative \footnote{Strictly speaking, we should say
  ``almost everywhere'' but this does not affect our analysis.}.
Therefore, $q_i=\prob[d_i(v_i)=1]=\prob[v_i-p+\sum_{j\neq
  i}{\inft_{j,i}q_j}>0]$, satisfying \definitionref{def_probeq}.

On the contrary, the strategy that agent $i$ ``buys whenever
$v_i\geq p-\sum_{j\neq i}{\inft_{j,i}q_j}$'' can be denoted as
$d_i(v_i)=\mathbb{I}[v_i-p+\sum_{j \neq i}{\inft_{j,i}q_j}>0]$ where
$\mathbb{I}$ is the indicator function.  This obviously maximizes
\equationref{eq_bne_utility}, and is a Bayesian Nash equilibrium.
\end{proof}


\medskip
{\bf \noindent \lemmaref{lemma_eqproperty} (restated).}\quad
{\sl
Equilibria satisfy the following properties:
\begin{enumerate}[\ \ \ \rm a)]
  \item For any equilibrium $\vq$ at price $p$, we have $\uq(p) \leq \vq
    \leq \oq(p)$.
  \item Given price $p$, for any probability vector $\vq \leq \uq(p)$,
   we have $f_p^\iter(\zero)=\uq(p)=f_p^{(\infty)}(\vq)$.
  \item Given price $p_1 \leq p_2$, we have $\uq(p_1) \geq \uq(p_2)$ and
    $\oq(p_1) \geq \oq(p_2)$.
\item $\uq(p) = \lim_{\varepsilon \rightarrow 0+} \uq(p+\varepsilon)$ and
      $\oq(p) = \lim_{\varepsilon \rightarrow 0-} \uq(p+\varepsilon)$.
\end{enumerate}
}
\begin{proof} $ $
\begin{enumerate}[\ \ \ \rm a)]
\item By the definition of equilibrium, $\vq =
  f_p(\vq)=f_p^{(\infty)}(\vq)$.  Next according to
  $\zero\leq\vq\leq\one$ and the monotonicity of $f_p$, we derive that:
     $$ f_p(\zero) \leq f_p(\vq) \leq f_p(\one) \Rightarrow ... \Rightarrow f_p^{(\infty)}(\zero) \leq f_p^{(\infty)}(\vq) \leq f_p^{(\infty)}(\one).$$
   \item By symmetry we only need to prove the first half.  We already
     know that $f_p(\uq(p))=\uq(p)$, then recall the monotonicity of
     $f_p$
     \begin{equation*}
     \begin{split}
       \zero \leq \vq \leq \uq(p) &\Rightarrow f_p(\zero) \leq f_p(\vq) \leq f_p(\vq(p)) \Rightarrow ... \\
                                  &\Rightarrow f_p^\iter(\zero) \leq f_p^\iter(\vq) \leq f_p^\iter(\uq(p)) \\
                                  &\Rightarrow \uq(p) \leq f_p^\iter(\vq) \leq \uq(p).
     \end{split}
     \end{equation*}
     Notice that the last ``$\Rightarrow$'' is due to $f_p^\iter(\zero)=\uq(p)=f_p(\uq(p))=...=f_p^\iter(\uq(p))$,
      while the convergence of the limit $f_p^\iter(\vq)=\lim_{m \rightarrow \infty}{f_p^\iter(\vq)}$
      is ensured by the sandwich theorem.
    \item This time we use the combined monotonicity of the function $f$
      (\factref{lemma_monotonicity})
     \begin{equation*}
     \begin{split}
      p_1 \leq p_2 \wedge \zero \geq \zero &\Rightarrow f_{p_1}(\zero) \geq f_{p_2}(\zero) \\
      p_1 \leq p_2 \wedge f_{p_1}(\zero) \geq f_{p_2}(\zero) &\Rightarrow f_{p_1}^{(2)} \geq f_{p_2}^{(2)}(\zero) \\
      &... \\
      &\Rightarrow f_{p_1}^\iter(\zero) \geq f_{p_2}^\iter(\zero) \\
      &\Rightarrow \uq(p_1) \geq \uq(p_2)
     \end{split}
     \end{equation*}
     For similar reason we also have $\oq(p_1) \geq \oq(p_2)$.
\item We only prove the first half while the property of $\oq(p)$ can be obtained in similar way.
We first claim that for any fixed $m$, $f_p^{(m)}(\zero) =
  \lim_{\varepsilon\rightarrow 0+}f^{(m)}_{p+\varepsilon}(\zero)$.
  Since $f_p(\vq)$ is a continuous multi-variable function with respect to $(p,\vq)$, the composition $f_p^{(m)}(\vq)$ is also continuous.
  This directly implies our claim.
%
%
%

Now assume Property (d) is not true: there exists $\delta>0$ and $\varepsilon_0$ such that $\forall 0<\varepsilon<\varepsilon_0$,
 $[\uq(p) - \uq(p+\varepsilon)]_i > \delta$ for some $i$.
 By definition of the pessimistic equilibrium, there exists $m_0$ such that $[\uq(p) - f_p^{(m_0)}(\zero)]_i <\delta/2$.
 On the other hand by our claim just proved, we can choose $\varepsilon$ small enough such that
 $[f^{(m_0)}_p(\zero) - f_{p+\varepsilon}^{(m_0)}(\zero)]_i < \delta/2$.
 Combining the two we have $\delta>[\uq(p) - f_{p+\varepsilon}^{(m_0)}(\zero)]_i \geq [\uq(p) - \uq(p+\varepsilon)]_i$,
 where the second inequality is due to non-decreasing sequence
  $\{f_{p+\varepsilon}^{(m)}(\zero)\}_{m\geq1}$ that converge to $\uq(p+\varepsilon)$.
 This contradiction completes the proof.
 We remark here that the left continuity does not hold, see the beginning of \sectionref{sec:unrestricted}.
%
\end{enumerate}
\end{proof}

\section{Missing Proofs in \sectionref{SEC:BAYESIAN}}\label{app:bayesian}

Before proving \lemmaref{lemma_alg1}, we first show \equationref{eq_step3} is
well defined.

\begin{lemma}\label{lemma_step23}
$\vl \in \mathbb{R}_{+}^n$ and $\varepsilon_{min}>0$.
\end{lemma}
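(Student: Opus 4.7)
The plan is to handle the two conclusions separately, starting with the positivity of $\vl$ and then deducing $\varepsilon_{min}>0$ as a corollary of Step 1's clean-up.

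For the positivity of $\vl$, I would read off from \equationref{eq_step2_1} the three blocks
\begin{align*}
\vl_W &= (I-L_{W\times W})^{-1}\vy_W,\\
\vl_Z &= \vy_Z + L_{Z\times W}(I-L_{W\times W})^{-1}\vy_W,\\
\vl_O &= \vy_O + L_{O\times W}(I-L_{W\times W})^{-1}\vy_W.
\end{align*}
The key ingredient is that strict diagonal dominance of $I-L$ implies (by e.g.\ Gershgorin, or directly since $\|L\|_\infty<1$) that $\rho(L_{W\times W})<1$, so the Neumann series
$(I-L_{W\times W})^{-1}=\sum_{k\ge 0}L_{W\times W}^{k}$
converges and is entrywise non-negative with ones on the diagonal. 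Combined with $\vy>\zero$ (which holds since $a_i<b_i$ for every $i$) and $L\ge 0$, this gives $\vl_W\ge \vy_W>\zero_W$, and the $Z$ and $O$ blocks are each $\vy_{Z},\vy_{O}$ plus a non-negative term, so strictly positive as well.

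For $\varepsilon_{min}>0$, I would use the fact that Step 1 has already absorbed every ``boundary'' index. Concretely, after the reassignments in Step 1, every $i\in Z$ satisfies $[\vx+L\vq^t]_i<0$ strictly (any index with equality to $0$ was moved to $W$), and every $i\in W$ satisfies $[\vx+L\vq^t]_i<1$ strictly (any index with equality to $1$ was moved to $O$). Therefore both numerators in \equationref{eq_step3},
$0-[\vx+L\vq^t]_i$ for $i\in Z$ and $1-[\vx+L\vq^t]_i$ for $i\in W$,
are strictly positive, while the denominators $\ell_i$ are strictly positive by the first part. Hence the minimum of finitely many strictly positive numbers is strictly positive.

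The only non-trivial step is the Neumann-series expansion, which requires the spectral bound $\rho(L_{W\times W})<1$. Because strict diagonal dominance of $I-L$ gives $\sum_{j}L_{i,j}<1$ for every row $i\in[n]$, the same holds row-wise for the principal submatrix $L_{W\times W}$, so $\|L_{W\times W}\|_\infty<1$ and the spectral radius bound follows. Once this is in hand, both claims reduce to bookkeeping on signs, so I do not expect any real obstacle.
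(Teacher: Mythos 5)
Your proposal is correct and follows essentially the same route as the paper's proof: expand $(I-L_{W\times W})^{-1}$ as a non-negative Neumann series (valid since $\rho(L_{W\times W})<1$), conclude $\vl_W\geq\vy_W>\zero$ and then positivity of the $Z$ and $O$ blocks, and use the strict inequalities left after Step 1 to get $\varepsilon_{min}>0$. Your explicit justification of $\rho(L_{W\times W})<1$ via the row-sum bound $\|L_{W\times W}\|_\infty<1$ is a detail the paper merely asserts, but it is the same underlying argument.
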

\begin{proof}
When $I-L$ is strictly diagonal dominant, the largest eigenvalue of $L_{W\times W}$ is smaller than $1$.
By the knowledge from complex analysis, the following limit exists
$$(I-L_{W\times W})^{-1} = I +L_{W\times W}+L_{W\times W}^2 +\cdots$$
and it is a non-negative matrix since $L$ is non-negative.

Now, $\vy$ is strictly positive and therefore $
  \vl_W = (I-L_{W \times W})^{-1}\vy_W \in \mathbb{R}_+^{|W|}$ is also positive.
Besides, recall the definition in \equationref{eq_step2_1} we have
$\vl_Z=\vy_Z + L_{Z\times W}\vl_W \in \mathbb{R}_+^{|Z|}$, $\vl_O =
\vy_O + L_{O\times W}\vl_W \in \mathbb{R}_+^{|W|}$, and therefore $\vl
\in \mathbb{R}_+^n$. Finally, by our Step 1, we have
$[\vx+L\vq^t]_i <0$ for $i\in Z$, and $[\vx+L\vq^t]_j <1$ for $j \in
W$. Therefore, $\varepsilon_{min}>0$ is properly defined.
\end{proof}

\medskip
{\bf \noindent \lemmaref{lemma_alg1} (restated).}\quad
{\sl
  $\forall 0 < \varepsilon \leq \varepsilon_{min}$, $\uq(p_t-\varepsilon) =
  \langle \zero_Z, \vr_W(\varepsilon), \one_O \rangle$.
}
\begin{proof}
We first show that $\vq = \langle \zero_Z, \vr_W(\varepsilon), \one_O \rangle$
is an equilibrium for $\varepsilon \in (0, \varepsilon_{min}]$.
By our definition of
$\varepsilon_{min}$, when $0 < \varepsilon \leq \varepsilon_{min}$ we must
have
\begin{small}
\begin{equation*}
\left\{
\begin{split}
[g_{p_t-\varepsilon}(\vq)]_W &= \vr_W(\varepsilon) \in [0,1]^{|W|} \\
[g_{p_t-\varepsilon}(\vq)]_Z &= \vr_Z(\varepsilon) \leq \zero_Z \\
[g_{p_t-\varepsilon}(\vq)]_W &= \vr_O(\varepsilon) \geq \vr_O(0) \geq \one_O
\end{split}
\right.
\end{equation*}
\end{small}
Since $f_{p_t-\varepsilon} = \med\{0,1,g_{p_t-\varepsilon}\}$
(\definitionref{def:transfer}), it must be the case that
$f_{p_t-\varepsilon}(\vq)=\vq$, i.e., $\vq$ is an equilibrium.
Next we lower bound the pessimistic equilibrium by $\uq(p_t-\varepsilon) \geq \vq$.
This will be sufficient to complete the proof following from \lemmaref{lemma_eqproperty}a.

Denote $p=p_t-\varepsilon$, notice that $\uq(p) = f_p^\iter(\zero)=f_p^\iter(\vq^t)$, where the second equality is because:
\begin{eqnarray*}
\textrm{\lemmaref{lemma_eqproperty}c} \Rightarrow \vq^t=\uq(p_t)\leq\uq(p) \\
\mathop{\Longrightarrow}^{\textrm{\lemmaref{lemma_eqproperty}b}} \uq(p)=f_p^\iter(\zero)=f_p^\iter(\vq^t)
\end{eqnarray*}
For the simplicity of notation, we define $\vx_W' := \vx_W + L_{W
\times O} \one_O$ as a constant vector, and according to the
definition of an equilibrium:
$$ \vq_W^t = \vx_W'+ L_{W\times W} \vq_W^t.$$
After repeated use of the monotonicity of transfer function $f$, we make
the following analysis \footnote{within which we implicitly adopted the
  following property:
\begin{eqnarray*} \varepsilon(I+L_{W \times W}+...+L_{W\times W}^{m-1})\vy_W + \vq_W^t
\leq \varepsilon_{min}(I-L_{W\times W})^{-1}\vy_W + \vq_W^t \leq
\one_W \end{eqnarray*}}
:
\begin{small}
\begin{equation}\label{eq_alg1}
\left\{
\begin{split}
  f_{p_t-\varepsilon}(\vq^t) &\geq \langle \zero_Z, \varepsilon \vy_W +
  \vq_W^t,
  \one_O \rangle \\
  f_{p_t-\varepsilon}^{(2)}(\vq^t) &\geq f_{p_t-\varepsilon}(\langle \zero_Z, \varepsilon \vy_W + \vq_W^t, \one_O \rangle) \\
  &\geq \langle \zero_Z, \vx_W'+\varepsilon \vy_W + L_{W\times
    W}(\varepsilon \vy_W + \vq_W^t),
  \one_O \rangle \\
  &
   =\langle \zero_Z, \varepsilon(I+L_{W \times W}) \vy_W + \vq_W^t, \one_O \rangle \\
  &\cdots \\
  f_{p_t-\varepsilon}^\iter(\vq^t) &\geq \langle \zero_Z,
  \varepsilon(\sum_{i=0}^{\infty}\left(L_{W
    \times W})^i\right)\vy_W + \vq_W^t, \one_O \rangle
\end{split}
\right.
\end{equation}
\end{small}
The last inequality in
\equationref{eq_alg1} implies that
\begin{eqnarray*}
\uq(p_t-\varepsilon) &= &f_{p_t-\varepsilon}^\iter(\vq^t) \geq \langle
\zero_Z, \varepsilon(I-L_{W \times W})^{-1}\vy_W+\vq_W^t, \one_O\rangle\\
&=&\langle \zero_Z, \vr_W(\varepsilon), \one_O \rangle = \vq
\end{eqnarray*}
\end{proof}

The following lemma in matrix analysis is important for our analysis.
\begin{lemma}[\cite{Horn1990}]\label{lemma_matrix}
  Given a non-negative matrix $M$ (i.e. $\forall i,j, M_{ij} \geq 0$),
  there exists a non-negative (and non-zero) eigenvector $\vx \geq\zero$
  satisfying $M\vx=\lambda\vx$, in which $\lambda = \rho(M)$ is a real
  number.
\end{lemma}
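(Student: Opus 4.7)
The plan is to reconstruct the classical Perron--Frobenius argument attributed here to \cite{Horn1990}. I would split the proof into two phases: a Brouwer fixed-point argument for strictly positive matrices, followed by a limiting/continuity argument to handle the general non-negative case.

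In the first phase, assuming $M_{i,j} > 0$ for all $i,j$, I would work on the standard simplex $\Delta = \{\vx \in \mathbb{R}^n : \vx \geq \zero, \sum_i x_i = 1\}$ and define $\phi : \Delta \to \Delta$ by $\phi(\vx) = M\vx / \|M\vx\|_1$. Strict positivity of $M$ ensures that $M\vx$ has no zero coordinate for $\vx \in \Delta$, so $\phi$ is continuous on the compact convex set $\Delta$, and Brouwer's theorem yields a fixed point $\vx^* \in \Delta$ with $M\vx^* = \lambda \vx^*$ for $\lambda = \|M\vx^*\|_1 > 0$. Note $\vx^* = \lambda^{-1} M\vx^*$ inherits strict positivity. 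To show $\lambda = \rho(M)$, for any other eigenvalue $\mu$ with eigenvector $\vy$, I would pick a coordinate $i$ maximizing $|y_i|/x^*_i$ and chain
\[ |\mu|\,|y_i| \;=\; |(M\vy)_i| \;\leq\; \sum_j M_{i,j}|y_j| \;\leq\; \frac{|y_i|}{x^*_i} \sum_j M_{i,j}\, x^*_j \;=\; \lambda |y_i|, \]
which yields $|\mu| \leq \lambda$, hence $\lambda = \rho(M)$.

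In the second phase, for a general non-negative $M$, I would perturb to $M_\varepsilon = M + \varepsilon\, \one \one^{\!\top}$, which is strictly positive. Phase~1 produces $\vx^{(\varepsilon)} \in \Delta$ and $\lambda_\varepsilon = \rho(M_\varepsilon) > 0$ with $M_\varepsilon \vx^{(\varepsilon)} = \lambda_\varepsilon \vx^{(\varepsilon)}$. By compactness of $\Delta$ and uniform boundedness of $\{\lambda_\varepsilon\}_{0 < \varepsilon \leq 1}$, I would extract a subsequence $\varepsilon_k \to 0$ along which $\vx^{(\varepsilon_k)} \to \vx \in \Delta$ and $\lambda_{\varepsilon_k} \to \lambda \geq 0$. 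Passing to the limit gives $M\vx = \lambda \vx$ with $\vx \neq \zero$ (because the normalization $\sum_i x_i = 1$ persists in the limit), and continuity of the spectral radius in the matrix entries---which reduces to continuity of the roots of the characteristic polynomial in its coefficients---identifies $\lambda = \rho(M)$.

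The main obstacle I expect is justifying $\lambda = \rho(M)$ rather than merely some non-negative eigenvalue. In the positive case the comparison chain above is the heart of the argument; in the limit passage it reduces to invoking continuity of the spectral radius, a standard but non-trivial fact from matrix analysis. A minor subtlety---that the limit eigenvector $\vx$ is genuinely non-zero---is handled automatically by the simplex normalization.
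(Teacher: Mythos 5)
The paper does not prove this lemma at all: it is quoted as a known result from Horn and Johnson \cite{Horn1990} (the existence half of the Perron--Frobenius theorem for general non-negative matrices), so there is no in-paper argument to compare against. Your reconstruction is a correct, self-contained proof along one of the standard routes: Brouwer's fixed point theorem applied to $\phi(\vx)=M\vx/\|M\vx\|_1$ on the simplex handles the strictly positive case, the max-ratio comparison with the strictly positive eigenvector $\vx^*$ shows every eigenvalue $\mu$ satisfies $|\mu|\leq\lambda$ (and since $\lambda$ is itself an eigenvalue of $M$, this pins $\lambda=\rho(M)$ --- worth stating explicitly), and the perturbation $M_\varepsilon=M+\varepsilon\one\one^{\top}$ with compactness of the simplex transfers the conclusion to general non-negative $M$. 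Two small points you correctly flag but should justify if writing this out in full: in the comparison chain, the maximizing coordinate has $|y_i|>0$ precisely because $\vx^*>\zero$ and $\vy\neq\zero$; and the identification $\lambda=\rho(M)$ in the limit rests on continuity of the spectral radius, which follows from continuity of the characteristic polynomial's roots in its coefficients (alternatively, one can sidestep that fact by combining $\lambda\leq\rho(M)$, which holds because the limit $\lambda$ is an eigenvalue of $M$, with $\rho(M)\leq\rho(M_\varepsilon)=\lambda_\varepsilon$ from entrywise monotonicity of the spectral radius on non-negative matrices, e.g.\ via $\rho(A)=\lim_m\|A^m\|^{1/m}$). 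What your approach buys relative to the citation is an elementary argument using only Brouwer's theorem --- a tool already central to the paper's discussion of equilibrium existence --- rather than the full machinery of \cite{Horn1990}.
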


\medskip
{\bf \noindent \lemmaref{lem:detect_eigen} (restated).}\quad
{\sl
  Given non-negative matrix $M$, if $I-M$ is reversible and $(I-M)^{-1}$
  is also non-negative, then $\rho(M)<1$; on the contrary, if $I-M$ is
  degenerate or if $(I-M)^{-1}$ contains negative entry, $\rho(M)\geq
  1$.
}
\begin{proof}
  For the first half, assume the contrary that $\rho(M)\geq 1$.
  According to \lemmaref{lemma_matrix}, there exists a non-negative
  $\vx$ s.t. $(I-M)\vx=(1-\rho(M))\vx \leq \zero$.  As $(I-M)^{-1}$ is
  non-negative, multiply a non-positive vector $(I-M)\vx$ to its right is
  also non-positive: $(I-M)^{-1}(I-M)\vx \leq \zero$.  But the last
  inequality means that $x \leq \zero$ which contradicts the result in
  \lemmaref{lemma_matrix} saying $x$ is a non-negative and non-zero
  eigenvector.

  For the second half, if $I-M$ is degenerate then $(I-M)\vx=0$ has a
  non-zero solution, which implies $M\vx=\vx$ and $\rho(M)\geq 1$.
  Assume the contrary that $\rho(M)<1$, then $(I-M)^{-1}=I+M+M^2+...$ is
  non-negative as $M$ is non-negative, resulting in a contradiction.
\end{proof}

\begin{algorithm}
\caption{\textsf{ConstrainedLineSweepMethod}($n,T,\mathbf{a},\mathbf{b}$)}
\label{alg_alg1}
\begin{algorithmic}[1]
\REQUIRE $n,T,\mathbf{a},\mathbf{b}$.
\ENSURE The pessimistic equilibrium function $\uq : p \mapsto \uq(p)$.
\STATE $L_{i,j} \leftarrow \inft_{j,i}/(b_i-a_i)$;
\STATE $p_1 \leftarrow \max_{1 \leq i \leq n}{b_i}$;
\STATE $\uq(p)|_{[p_1,\infty)} \leftarrow \zero$;
\STATE $Z \leftarrow [n]$; $W \leftarrow \emptyset$; $O \leftarrow
\emptyset$; $t \leftarrow 1$;
\WHILE{$\uq(p_t) \neq \one$}

\STATE $\vq^t \leftarrow \uq(p_t)$;

\STATE $\vx \leftarrow ((b_1-p_t)/(b_1-a_1), (b_2-p_t)/(b_2-a_2), ...,
(b_n-p_t)/(b_n-a_n))$;
\STATE $\vy \leftarrow (1/(b_1-a_1), 1/(b_2-a_2), ..., 1/(b_n-a_n))^T$;

\FORALL{$i \in Z$ s.t. $\vx_i+\sum_j{L_{i,j}q_j^t}=0$} \label{code1:step1start}
\STATE $Z \leftarrow Z\setminus\{i\}$; $W \leftarrow W\cup\{i\}$;
\ENDFOR

\FORALL{$i \in W$ s.t. $\vx_i+\sum_j{L_{i,j}q_j^t}=1$}
\STATE $W \leftarrow W\setminus\{i\}$; $O \leftarrow O\cup\{i\}$;
\ENDFOR \label{code1:step1end}

\STATE $\vl_W \leftarrow (I-L_{W \times W})^{-1}\vy_W$; \hfill
\COMMENT{See \equationref{eq_step2_1}}
\STATE $\vl_Z \leftarrow \vy_Z + L_{Z \times W} \vl_W$; \hfill
\COMMENT{See \equationref{eq_step2_1}}
\STATE $\varepsilon_{min}=\min\{ \min_{i\in
  Z}\{\frac{0-[\vx+L\vq^t]_i}{\ell_i}\}, \min_{i \in
  W}\{\frac{1-[\vx+L\vq^t]_i}{\ell_i}\} \}$; \hfill \COMMENT{See
  \equationref{eq_step3}}
\STATE $p_{t+1} \leftarrow p_t - \varepsilon_{min}$;
\STATE $\uq(p)|_{[p_{t+1},p_t)} \leftarrow \langle\one_Z, \vr_W(p_t-p),
\one_O \rangle$; \label{code1:add}
\STATE $t \leftarrow t+1$;

\ENDWHILE
\STATE $\uq(p)|_{(-\infty,p_t)} \leftarrow \one$;
\RETURN {$\uq$};
\end{algorithmic}
\end{algorithm}




\medskip
{\bf \noindent \lemmaref{lem:pivot} (restated).}\quad
{\sl
For any $W_2\subset W$ s.t. $\rho(L_{W_2 \times W_2})\geq 1$, we have
$\forall
 \varepsilon > 0$,
 $[\uq(p_t-\varepsilon)]_k=1$.
}
\begin{proof}
We will only prove the statement when $W_2=W$, as the analysis for $W_2\neq W$ is similar.

  As $L_{W\times W}$ is a non-negative matrix, based on
  \lemmaref{lemma_matrix} there exists a non-zero eigenvector $\vu_W \geq
  \zero_W$ such that $L_{W\times W}\vu_W = \lambda\vu_W$ and
  $\lambda = \rho(L_{W\times W})\geq1$.  $\vu_W$ can be extended to $[n]$ by
  defining $\vu_{Z\cup O}=\zero_{Z\cup O}$.
  Let
  \begin{equation*}
    k=\arg\min_{k\in [n], u_k\neq0}{\frac{1-q_k^t}{u_k}}
  \end{equation*}

  Since $\vu \neq \zero$, the above equation is well defined.  As $q_i^t
  <1$ for any $i\in W$ in the current configuration, we also have
  $\frac{1-q_k^t}{u_k}>0$.  The tie is broken arbitrarily.

  Since $\vy >0$ and $\vu \ge 0$, for any $\varepsilon>0$, there exists
  $\delta>0$ satisfying $\delta \vu \leq \varepsilon \vy$.  Because
  $\delta$ can be arbitrary small, let $\delta=\left(
    \frac{1-q_k^t}{u_k} \right)/(1+\lambda+\cdots+\lambda^{m-1})$ in
  which $m$ is sufficiently large to satisfy the above constraint.
  For any probability vector $\vq$, we have
  $[f_{p_t-\varepsilon}(\vq)]_i = \med\{0,1,[\vx+\varepsilon\vy+L\vq]_i\}$.
  Define function $ [h(\vq)]_i =
  \med\{0,1,[\vx+\delta\vu+L\vq]_i\}$. Clearly,
  $f_{p_t-\varepsilon}(\vq) \geq h(\vq)$.

  Starting from $\vq^t \geq \vq^t$, we continue to apply the left side
  by $f_{p_t-\varepsilon}$ and the right side by $h$, we derive the
  followings: \footnote{ Within which we implicitly adopted the
    following property:
$ \forall m_0 < m, $
\[\delta(1+\lambda+...+\lambda^{m_0-1})\vu_W +
\vq_W^t \leq \left(\frac{1-q_k^t}{u_k}\right)\vu_W + \vq_W^t \leq
\one_W \]
}
\begin{small}
\begin{eqnarray*}
    f_{p_t-\varepsilon}(\vq^t) &\geq &h(\vq^t) \geq \langle \zero_Z,
    \vx_W'+\delta\vu_W + L_{W \times W}\vq_W, \one _O \rangle\\
    &=&\langle \zero_Z, \delta\vu_W + \vq_W^t,\one_O\rangle \\
    f_{p_t-\varepsilon}^{(2)}(\vq^t)&\geq&h(\langle \zero_Z, \delta\vu_W+\vq_W^t,\one_O\rangle)\\
    &\geq&\langle\zero_Z, \delta(I+L_{W\times{}W})\vu_W + \vq_W^t, \one_O \rangle\\
    &=&\langle(\zero_Z, \delta(1+\lambda)\vu_W+\vq_W^t,\one_O\rangle\\
    &\ldots \\
    f_{p_t-\varepsilon}^{(m)}(\vq^t) &\geq &\langle\zero_Z,\delta(1+\lambda+...+\lambda^{m-1})\vu_W+\vq_W^t,\one_O\rangle\\
    &=&\langle \zero_Z, \left(\frac{1-q_k^t}{u_k}\right)\vu_W + \vq_W^t,\one_O\rangle
\end{eqnarray*}
\end{small}

From our selection of $k$, we know that
\[\left[ \langle \zero_Z,\Big(\frac{1-q_k^t}{u_k}\Big)\vu_W +
  \vq_W^t, \one_O\rangle \right]_k = 1\]
i.e., $\forall \varepsilon>0$, we have $[\uq(p_t-\varepsilon)]_k \geq
[f_{p_t-\varepsilon}^{(m)}(\vq^t)]_k = 1$.  This completes the proof of the
existence of pivot $k$.
\end{proof}

 \medskip
 {\bf \noindent \lemmaref{lemma_alg2_new_1} (restated).}\quad
 {\sl
 Given the definition of $\vu$ in \equationref{eq_newu} and
$k$ using
 \equationref{eq_pick}, we have $\forall \varepsilon > 0,
 [\uq(p_t-\varepsilon)]_k=1.$
 }
 \begin{proof}
   We only prove the case when $Z=O=W\setminus W_2=\emptyset$, and will briefly describe how our proof can be
   extended to the general case.
   We use $\vq_{-w}$ to denote $\vq_{[n]\setminus \{w\}}=\vq_{W_1}$.

   Let $\delta = \min_{k\in [n], u_k\neq0}{\frac{1-q_k^t}{u_k}} > 0$. We
   know that if we increase from $\vq^t$ in the direction of $\vu$, we
   can at most raise $\delta\vu$ until agent $k$'s probability hits $1$.
   For a fixed $\varepsilon > 0$, let $\vq'=\uq(p_t-\varepsilon)$ be the
   pessimistic equilibrium.  To prove $q'_k=1$ we consider two cases:
 \begin{itemize}

 \item $q'_{w} \geq q^t_{w} + \delta$.

   This means that in the real scenario, agent $w$ indeed increases her
   probability by at least $\delta$.  It can be verified that in this
   case, the rest of the agents in $W_1=[n]\setminus\{w\}$ have to
   increase by at least $\delta\vu_{-w}$.  In other words, $\vq'-\vq^t
   \geq \delta\vu$ which already implies $q'_k \geq 1$ by our definition
   of $k$ and $\delta$.

 \item $q'_{w} < q^t_{w} + \delta$.

   In this case, the actual
   final probability of $w$ is small.   Let
   $\delta' = q'_w - q^t_w < \delta$.  We start from the inequality
   $\vq^t + \langle\zero_{-w},\delta'\rangle \geq \vq^t$.  Let ${\bf
     z}_{-w} = L_{W_1 \times \{w\}}$.  By applying the transfer
   function $f_{p_t-\varepsilon}$ to both sides and using the monotonicity,
   $$ \vq^t + \langle \varepsilon \vy_{-w} + \delta' {\bf z}_{-w}, \sigma_1
   \rangle = f_{p_t-\varepsilon}(\vq^t +
   \langle\zero_{W_1},\delta'\rangle) \geq f_{p_t-\varepsilon}(\vq^t) $$
   for some $\sigma_1 \geq 0$.  Based on $ q^t_w + \delta' \geq q'_w =
   [f_{p_t-\varepsilon}^\iter(\vq^t)]_w \geq [f_{p_t-\varepsilon}(\vq^t)]_w$,
   we always have $ \vq^t + \langle \varepsilon \vy_{-w} + \delta' {\bf
     z}_{-w}, \delta' \rangle \geq f_{p_t-\varepsilon}(\vq^t) $.  By
   applying the transfer function again we have
   $$ \vq^t + \langle \varepsilon (I+L_{W_1 \times W_1}) \vy_{-w} + \delta'
   (I+L_{W_1 \times W_1}) {\bf z}_{-w}, \sigma_2 \rangle \geq
   f_{p_t-\varepsilon}^{(2)}(\vq^t). $$ We continue to replace $\sigma_2$ by
   $\delta'$ and apply the transfer function.  Doing this iteratively
   while assuming that $\varepsilon$ is sufficiently small, we have:
   $$ \vq^t + \langle \varepsilon (I-L_{W_1 \times W_1})^{-1} \vy_{-w} +
   \delta' (I-L_{W_1 \times W_1})^{-1} {\bf z}_{-w}, \delta' \rangle \geq
   f_{p_t-\varepsilon}^\iter(\vq^t). $$ Recall the definition of $\vu$ we
   can rewrite the above equation as: $ \vq^t + \delta'\vu + \langle
   \varepsilon (I-L_{W_1 \times W_1})^{-1} \vy_{-w}, 0 \rangle \geq \vq'. $
   Since $\delta'<\delta$ and $\vq^t < \one$, we have $\vq^t + \delta'\vu
   < \one$.  When $\varepsilon$ is sufficiently small, we also have that the
   left hand side in the above equation is smaller than $\one$, and this
   proves that $\vq' < \one$ when $\varepsilon$ is small, which contradicts
   \lemmaref{lem:pivot} which says that the pivot always exists.

 \end{itemize}


 We describe how we prove the general case where $Z, W\setminus
 W_2$ and $O$ are not necessarily empty.  Imagine a subproblem with only
 $|W_2|$ rational players, while for agent $i \in [n]\setminus W_1$, her
 probability is fixed to $q^t_i$, no matter how the price varies and
 other players behave.  We can also define the transfer function and
 pessimistic equilibrium in this subproblem.  Then, using the same
 argument as above, we can find one pivot $k$ such that agent $k$'s
 probability hits $1$ in the subproblem, when $p<p_t$.  It can be
 verified that in the original problem, this agent $k$ will also buy with
 probability $1$, since when releasing the constraints on agents in $[n]
 \setminus W_2$, the entire probability vector may only increase rather
 than decrease.
 \end{proof}

\begin{algorithm}
\caption{\textsf{LineSweepMethod}($n,T,\mathbf{a},\mathbf{b}$)}
\label{alg_alg2}
\begin{algorithmic}[1]
\REQUIRE $n,T,\mathbf{a},\mathbf{b}$.
\ENSURE The pessimistic equilibrium function $\uq : p \mapsto \uq(p)$.

\STATE $L_{i,j} \leftarrow \inft_{j,i}/(b_i-a_i)$;
\STATE $p_1 \leftarrow \max_{1 \leq i \leq n}{b_i}$;
\STATE $\uq(p)|_{[p_1,\infty)} \leftarrow \zero$;
\STATE $Z \leftarrow [n]$; $W \leftarrow \emptyset$; $O \leftarrow \emptyset$; $t \leftarrow 1$;
\WHILE{$\uq(p_t) \neq \one$}

\STATE $\vq^t \leftarrow \uq(p_t)$;

\STATE $\vx \leftarrow ((b_1-p_t)/(b_1-a_1), (b_2-p_t)/(b_2-a_2), ...,
(b_n-p_t)/(b_n-a_n))$;
\STATE $\vy \leftarrow (1/(b_1-a_1), 1/(b_2-a_2), ..., 1/(b_n-a_n))^T$;

\FORALL{$i \in Z$ s.t. $\vx_i+\sum_j{L_{i,j}q_j^t}=0$}
\STATE $Z \leftarrow Z\setminus\{i\}$; $W \leftarrow W\cup\{i\}$;
\ENDFOR

\FORALL{$i \in W$ s.t. $\vx_i+\sum_j{L_{i,j}q_j^t}=1$}
\STATE $W \leftarrow W\setminus\{i\}$; $O \leftarrow O\cup\{i\}$;
\ENDFOR

\IF{$\rho(L_{W \times W})<1$}

\STATE $\vl_W \leftarrow (I-L_{W \times W})^{-1}\vy_W$;
and $\vl_Z \leftarrow \vy_Z + L_{Z \times W} \vl_W$; \hfill \COMMENT{See
  \equationref{eq_step2_1}}
\STATE $\varepsilon_{min}=\min\{ \min_{i\in Z}\{\frac{0-[\vx+L\vq^t]_i}{\ell_i}\}, \min_{i \in W}\{\frac{1-[\vx+L\vq^t]_i}{\ell_i}\} \}$; \hfill \COMMENT{See \equationref{eq_step3}}
\STATE $p_{t+1} \leftarrow p_t - \varepsilon_{min}$;
\STATE $\uq(p)|_{[p_{t+1},p_t)} \leftarrow \langle\one_Z, \vr_W(p_t-p), \one_O \rangle$; \label{code2:add}

\ELSE[\ensuremath{|W| \geq 2}]

\STATE Assume $W = \{w_1,w_2,...w_{|W|}\}$;

\FOR{$i\leftarrow 2$ to $|W|$}
\STATE $W_1 \leftarrow \{w_1,...w_{i-1}\}$; $W_2 \leftarrow \{w_1,...w_i\}$;
\IF{$\rho(L_{W_2\times W_2}) \geq 1$}
\STATE $\vu_{W_1} = (I-L_{W_1\times W_1})^{-1}L_{W_1 \times \{w_i\}}$; \\
$u_{w_i}=1$; $\vu_{[n]\setminus W_2}=\zero_{[n]\setminus W_2}$; \hfill \COMMENT{See \equationref{eq_newu}}
\STATE $k\leftarrow \mathop{\rm argmin}_{k \in [n], u_k \neq
 0}\{(1-[\vq^t]_k)/u_k\}$; \hfill \COMMENT{See \equationref{eq_pick}}
\STATE $O \leftarrow O \cup\{k\}$; $\bar O \leftarrow [n] \setminus O$;
\STATE $\forall i\in \bar O, \quad [a_i',b_i']=[ a_i+\sum_{j\in O}{\inft_{j,i}}, b_i+\sum_{j \in O}{\inft_{j,i}} ]$; \hfill \COMMENT{See \equationref{eq_alg2_1}}
\STATE $\uq' \leftarrow \text{\textsf{LineSweepMethod}}(|\bar O|, T_{\bar O \times \bar O}, \mathbf{a'}, \mathbf{b'})$;
\STATE $\uq(p)|_{(-\infty,p_t)} \leftarrow \langle\uq'(p), \one_O \rangle$; \label{code2:assign}
\RETURN{$\uq$};
\ENDIF

\ENDFOR
\hfill\COMMENT{never reach here}

\ENDIF

\STATE $t \leftarrow t+1$;

\ENDWHILE
\STATE $\uq(p)|_{(-\infty,p_t)} \leftarrow \one$;
\RETURN {$\uq$};
\end{algorithmic}
\end{algorithm}

\newpage

\medskip
{\bf \noindent \lemmaref{lemma_alg2_2} (restated).}\quad
{\sl
Let $\uq'(p)$ be the pessimistic equilibrium function in the
subproblem. We have:
$$\forall p<p_t, \uq(p)=\langle\uq'(p),\one_{O'}\rangle.$$
}
\begin{proof}
  We prove the lemma in two steps.  We will first show that
  $\langle\uq'(p),\one_{O'}\rangle$ is an equilibrium at price $p$, and
  then lower bound the pessimistic equilibrium by
  $\langle\uq'(p),\one_{O'}\rangle \leq \uq(p)$.  Combined with the
  property of equilibrium in \lemmaref{lemma_eqproperty}a, it is enough
  to see that $\langle\uq'(p),\one_{O'}\rangle$ is the pessimistic
  equilibrium of the original problem.

  For convenience let $\overline{O}'=[n] \setminus O'$.

\begin{itemize}
\item
Let $\vq=\langle\uq'(p),\one_{O'}\rangle$, and we are going to show
$f_p(\vq)=\vq$.
Based on the definition of $[a_i',b_i']$ in the subproblem, we already have that $[f_p(\vq)]_{\overline{O}'}=\vq_{\overline{O}'}=\uq'(p)$.
This is because $\forall i \in \overline{O}'$,
\begin{equation*}\begin{split}
[f_p(\vq)]_i &= \med\left\{0,1,\frac{b_i-p+\sum_{j\in [n]}{\inft_{j,i}q_j}}{b_i-a_i}\right\} \\
                        &= \med\left\{0,1,\frac{b_i'-p+\sum_{j\in \overline{O}'}{\inft_{j,i}q_j}}{b_i'-a_i'}\right\} = q_i.
\end{split}\end{equation*}
Therefore we only need to show that $[f_p(\vq)]_{O'}=\one_{O'}$.
Assume the contradiction that $[f_p(\vq)]_{O'} \leq \one_{O'}$ and $\exists i \in O'$ s.t. $[f_p(\vq)]_i<1$.
We start from $f_p(\vq) \leq \vq$ and arrive at $f_p^{(m)}(\vq) \leq f_p^{(m-1)}(\vq)$ by using the monotonicity of $f$.
The following limit exists because a non-increasing and lower bounded sequence has a limit.
$$\vq^* = \lim_{m \rightarrow \infty}{f_p^{(m)}(\vq)} \leq f_p(\vq) $$
Because of the continuity of function $f$, $\vq^*$ is an equilibrium at price $p$.
According to \lemmaref{lemma_eqproperty} $$[\uq(p)]_i \leq q_i^* \leq [f_p(\vq)]_i < 1.$$
If $i \in O = O' \setminus\{k\}$, this contradict the fact that $1=[\uq(p_t)]_i \leq [\uq(p)]_i$;
if $i=k$ this contradicts \lemmaref{lemma_alg2_new_1}.
Therefore it must be the case that $f_p(\vq)=\vq$.

\item
We now lower bound the pessimistic equilibrium $\uq(p)$.  For similar
reason as the first half of the proof, we have
$[\uq(p)]_{O'}=\one_{O'}$.  Let $f_p'$ be the transfer function of the
subproblem.  We start from the inequality $\langle
\zero_{\overline{O}'}, \one_{O'} \rangle \leq \uq(p)$ and apply the
monotone function $f_p$ to both sides:
$$f_p(\langle \zero_{\overline{O}'}, \one_{O'} \rangle) = \langle f_p'(\zero_{\overline{O}'}), \star \rangle \leq \uq(p)$$
We need not to know what $\star$ is, and start with the new inequality
$\langle f_p'(\zero_{\overline{O}'}), \one_{O'} \rangle \leq \uq(p)$
and derive that:
$$f_p(\langle f_p'(\zero_{\overline{O}'}), \one_{O'} \rangle) = \langle f_p'^{(2)}(\zero_{\overline{O}'}), \star \rangle \leq \uq(p)$$
By doing this again and again, we reach the inequality
\[\langle{}f_p'^\iter(\zero_{\overline{O}'}), \one_{O'} \rangle \leq \uq(p)\]
which immediately gives us $\langle\uq'(p),\one_{O'}\rangle \leq
\uq(p)$.
\end{itemize}

This completes the proof.
\end{proof}

\section{Missing Proofs in \sectionref{SEC:OTHER}}

\subsection{Hardness results with negative influences}
\label{SEC:HARDNESS}


In this section, we show that when the influence values can be
negative, it is PPAD-hard to compute an {\em approximate}
equilibrium. We define a probability vector $\vq$ to
be an $\varepsilon$-approximate
equilibrium for price $p$ if:
$$q_i \in (q_i'-\varepsilon, q_i'+\varepsilon),$$
where $q_i'=\med\{0,1,
\frac{b_i-p+\sum_{j\in[n]}T_{j,i}q_j}{b_i-a_i}\}$.

We prove the PPAD hardness by a reduction from the two player Nash
equilibrium computation.  Our construction is inspired
by~\cite{Chen2011}.  Let matrices $A, B \in {\mathcal{R}}^{n\times n}$ be the
payoff matrices of the two players respectively, i.e. $(A_i)_j$
(resp. $(B_i)_j$) is the payoff for the first player (resp. the second
player) when the first player plays its $i$-th strategy and the second
player plays its $j$-th strategy.  It is PPAD-hard to approximate the
two player Nash Equilibrium with error $1/n^\alpha$ for any constant
$\alpha>0$~\cite{Chen2009a}. We build an instance of our pricing problem as
follows. ($\delta$ is a small value to be determined later.)


\begin{itemize}
\item Price $p = 1/2$.
\item User $X_i$ with value interval $[0,1]$ for $i\in[n]$. The
  probability that $X_i$ buys the product is $x_i$.
\item User $Y_i$ with value interval $[0,1]$ for $i \in [n]$. The
  probability that $Y_i$ buys the product is $y_i$.
\item User $ U_{i,j}$, $i,j\in [n]$ is used to enforce $x_i =0$, when
  $A_iy^T+\delta < A_jy^T$. For any $k\in [n]$, we assign influence on edge
  $(Y_k, U_{i,j})$ to be $  (A_j)_k-(A_i)_k$. Define $U_{i,j}$'s valuation interval to be $[1/2-\delta,
  1/2-\delta+\delta^2]$.
\item User $ V_{i,j}$, $i,j\in [n]$ is used to enforce $y_i = 0$ when $B_ix^T
  +\delta < B_jx^T$. For any $k\in [n]$,
  influence on edge $(X_k, V_{i,j})$ is $(B_j)_k -
  (B_i)_k$. Define $V_{i,j}$'s valuation to be $[1/2-\delta,
  1/2-\delta+\delta^2]$.
\item For $i,j\in [n]$, influence values on edges $(U_{i,j}, X_i)$ and $(V_{i,j}, Y_i)$
  are $-1$.
\item All other pair-wise influence values are zero.
\end{itemize}

In our setting, if $U_{i,j}$ buys the product, it will provide
influence of $-1$ to $X_i$, which will imply the probability that
$X_i$ will buy the product is $0$.

\medskip
{\bf \noindent \theoremref{thm:hard} (restated).}\quad
{\sl
It is PPAD-hard to compute an $n^{-c}$-approximate equilibrium of our
pricing system for any $c>1$ when influences can be negative.
}
\begin{proof}
  Let $\delta = n^{-c}$. Consider the instance we constructed above.
  Let $\vx,\, \vy,\, \vu, \, \vv$ be the set of vectors that form an
  $\delta$-approximate equilibrium of our pricing instance.  We will
  show that we can construct an $O(n^{1-c})$ approximate Nash
  equilibrium for the two player game.
  To simply
  the notation, we define $x\pm y = [x-y, x+y]$.  In particular, we
  have

\begin{align*}
x_i &\in \mbox{med}\{0, 1, 1/2 - \sum\nolimits_{i} u_{i,j}\} \pm \delta\\
y_i &\in \mbox{med}\{0, 1, 1/2 - \sum\nolimits_{i} v_{i,j}\} \pm \delta\\
u_{i,j} &\in \mbox{med}\{0, 1, 1-1/\delta+1/\delta^2\langle
A_{j}-A_i, \vy\rangle\}\pm \delta\\
v_{i,j} &\in \mbox{med}\{0, 1, 1-1/\delta+1/\delta^2\langle
B_{j}-B_{i}, \vy\rangle\}\pm \delta
\end{align*}

For the purpose of controlling normalization, we first prove that
$||\vx||_\infty = ||\vy||_\infty \in 1/2\pm \delta$. It is clear that
$||\vx||_\infty \leq 1/2+\delta$, since $X_i$ receives no positive influence
in our construction. Furthermore, for any vector $\vy$, let $t =
\arg \max_{i \in [n]}\{A_i\vy^T\}$. Then for each $U_{t,j}$, the sum of
influence is $(A_j-A_t)\vy^T \leq 0$.
As a result, $U_{t,j}$ will
never buy the product and give a negative influence to $X_t$, which
implies $||\vx||_\infty \geq x_t \geq 1/2-\delta$.  The proof of $||\vy||_\infty \in
1/2\pm \delta$ is similar. We can define
\[[\vx']_i = \left\{ \begin{array}{ll}
[\vx]_i & \textrm{ if $[\vx]_i > \delta$}\\
0 & \textrm{otherwise}
\end{array} \right.\]
Similarly, we obtain $\vy'$. We then normalize them to $\vx^* =
\frac{\vx'}{||\vx'||_1} $ and $\vy^* =
\frac{\vy'}{||\vy'||_1}$. It is sufficient to prove that
$\vx^*$ and $\vy^*$ form
an $9n\delta$-approximate Nash for the two player game. In particular,
we shall show
\[ \langle A_i, \vy^*\rangle  + 6n\delta < \langle A_j, \vy^*\rangle \Longrightarrow x^*_i = 0\]
\[ \langle B_i, \vx^*\rangle + 6n\delta< \langle B_j, \vx^*\rangle \Longrightarrow y^*_i = 0\]
When $A_i\vy^* + 6n\delta < A_j\vy^*$, clearly $\langle A_j-A_i,
\vy'\rangle > 6n\delta||\vy'||_1>3n\delta$ and $\langle A_j-A_i, \vy\rangle >
3n\delta - 2n\delta\geq n\delta$. (The entries in $A$ and $B$ are
within range $[-1,1]$.) Therefore, $u_{i,j}\geq 1-\delta$,
which implies $x_i \leq \delta$ and $x^*_i=0$ by our construction. The
proof for the statement of $\vy^*$ is symmetric.
\end{proof}

\theoremref{thm:hard} implies that computing an exact equilibrium in
our pricing system is PPAD-hard, when the price is given and the
influence could be negative.



\subsection{Discriminative pricing model} \label{SEC:MULTIPRICE}

In this section, we discuss the extension of our problem in the discriminative pricing
model, in which different agents may be offered with different prices
to the same good, and there are at most $k$ different prices
offered. We only consider non-negative influences in this section.
Let $G$ be a $k$-partition of agent set $[n]$ and $g_i$ denote the group
which agent $i$ belongs to.  Let $\vp = (p_1, p_2,\ldots, p_k)$ be the
price vector corresponding to the $k$ groups in the partition.
\definitionref{def_probeq}, \definitionref{def:transfer}, and \definitionref{def:poeq}
for a single price $p$ can be straightforwardly extended to the case
of price vector $\vp$ with partition $G$, and we omit their
re-definitions here.  We define the revenue maximization problem under
the discriminative pricing model as follows.
\begin{definition}
The revenue maximization problem
is to compute an optimal price vector $\vp = (p_1, p_2,\ldots, p_k)$
w.r.t. the pessimistic equilibrium (resp. optimistic equilibrium):
\begin{small}
$$\argmax_{\vp\geq \mathbf{0}} \sum_{i\in [n]} p_{g_i}\cdot [\uq(\vp)]_i
\mbox{\; (resp. }
\argmax_{\vp\geq \mathbf{0}} \sum_{i\in [n]} p_{g_i}\cdot [\oq(\vp)]_i
\mbox{ )}.$$
\end{small}
\end{definition}

Apparently, the {\em uniform pricing} case is a special case of
discriminative model when $k=1$. In this section, we discuss two
different cases in this model: the fixed partition case and the
choosing partition case. As the name suggests, in the fixed partition
case, the partition of the agents are given. On the other hand, in the
choosing partition case, the algorithm has the flexibility to choose
the partition.


\subsubsection{Fixed partition case with constant $k$.}
\label{sec:givenpart}
In this case, we let the $k$-partition of $G$ be fixed and known to
our algorithm.  This is natural in a modern market such as setting prices
based on different regions or different user memberships.

Our algorithm for the uniform pricing model can be extended to some
restricted cases in the fixed partition case.  For instance, given a
fixed price vector $\vp =\langle p_1,p_2,\ldots,p_n\rangle$, we
consider (a) all possible price vectors that are $\{ \vp+x\one\,|\,
x\in \mathbb{R}\}$; or (b) all possible price vectors that are $\{\xi
\vp\,|\, \xi >0\}$. These two cases capture certain scenarios in which
the prices in different partitions either follow fixed ratios, e.g. by
different tax ratio or income distribution, or have fixed differences,
e.g. by transportation costs.  In both cases, we can reduce the
problem to a uniform price one, which can be solved by our proposed
algorithm. We only present the algorithm for the first case and the
proof for the second case is similar.

\begin{claim}
There is a refinement of \algorithmref{alg_alg2} for all possible price
vectors that are $\{ \vp+x\one\,|\, x\in \mathbb{R}\}$.
\end{claim}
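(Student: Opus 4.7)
The plan is to reduce this fixed-partition problem to a single-price instance through a simple change of variables, then invoke Algorithm 2 essentially as a black box. Given the base price vector $\vp = \langle p_1,\ldots,p_k \rangle$, define for each agent $i$ the shifted valuation interval $[a_i',b_i'] = [a_i - p_{g_i}, b_i - p_{g_i}]$. Under price vector $\vp + x\one$, agent $i$ with private value $v_i \sim U(a_i,b_i)$ buys iff $v_i - p_{g_i} + \sum_j T_{j,i} q_j \geq x$; equivalently, an agent with shifted value $v_i' = v_i - p_{g_i} \sim U(a_i',b_i')$ buys iff $v_i' + \sum_j T_{j,i} q_j \geq x$. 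Hence the pessimistic equilibrium of the original problem at $\vp+x\one$ coincides with that of a uniform-price instance with intervals $\{[a_i',b_i']\}$ at uniform price $x$.

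The next step is to run Algorithm 2 on this shifted uniform-price instance, which outputs a piecewise linear function $\uq'(x)$ with at most $2n+1$ pieces that gives the pessimistic equilibrium at every $x$. The revenue in the original problem then becomes
\begin{equation*}
R(x) = \sum_{i \in [n]} (p_{g_i} + x)\,[\uq'(x)]_i,
\end{equation*}
which is a sum of a piecewise linear function of $x$ (weighted by constants $p_{g_i}$) and a piecewise quadratic function of $x$ (the $x \cdot [\uq'(x)]_i$ terms). On each of the $O(n)$ linear pieces of $\uq'$, $R(x)$ is a quadratic in $x$ whose maximum can be computed in closed form by setting the derivative to zero (respecting the interval endpoints and any feasibility constraint such as $p_{g_i} + x \geq 0$). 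Taking the best piece yields the overall optimum.

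The one subtlety is that the shifted intervals may now have $a_i' < 0$, which violates the standing modeling assumption $a_i \geq 0$. However, an inspection of Algorithm 2 and its supporting lemmas in \sectionref{SEC:BAYESIAN} shows that the non-negativity of $a_i$ is never actually used: the algorithm manipulates only the derived quantities $L_{i,j} = T_{j,i}/(b_i - a_i)$, the vectors $\vx,\vy$ (whose entries depend on $b_i - p_t$ and $b_i - a_i$), and the partition dynamics, all of which merely require $a_i < b_i$ so that $b_i - a_i > 0$. Since this strict inequality is preserved by the shift, the correctness and polynomial running time carry over unchanged. The main (and only) obstacle is therefore this verification; once it is made explicit, the reduction is direct and the revenue optimization is a straightforward piecewise-quadratic maximization.
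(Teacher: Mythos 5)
Your proposal is correct and is essentially the paper's own argument: the paper absorbs the per-group offsets $\Delta_i = p_{g_i}-\min_j p_j$ into the vector $\vx$ in line 7 of \algorithmref{alg_alg2}, which is the same reduction as your shift of the valuation intervals by $p_{g_i}$, up to an additive reparametrization of the sweep variable. Your added checks (that only $a_i<b_i$, not $a_i\geq 0$, is used by the algorithm, and that the resulting revenue is piecewise quadratic in $x$) are correct and merely make explicit what the paper leaves implicit.
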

\begin{proof}
In order to compute revenue with respect to price vector $\vp$,
we refine our line sweep method. Let $p_t = \min_{i \in [k]}{p_i}$
be the minimum entry in $\vp$ and $\Delta_i$ be $p_{g_i} - p_t$. We
use $\uq$ to denote the equilibrium when agents $i$ is offered price
$p+\Delta_i$ and modify \algorithmref{alg_alg2} line $7$ to
\[\vx\leftarrow \left(\frac{b_1-p_t-\Delta_1}{b_1-a_1},
  \frac{b_2-p_t-\Delta_2}{b_2-a_2}, ..., \frac{b_n-p_t-\Delta_n}{b_n-a_n}\right)\]
\end{proof}


If the space expanded by the price
vectors is not one dimensional, enumerating all structures like our
proposed line sweep algorithm is generally impractical. (See an
counter example in \appendixref{app:multiprice}.)


When there is no constrain on the possible prices, we design an FPTAS when
 $k$ is a constant.
We first estimate the optimal revenue which we can hope to achieve. In
particular, for any group $i\in[k]$, we set the prices for all other
groups to be $0$. By our algorithm in \sectionref{SEC:BAYESIAN}, we
can compute the maximum revenue from group $i$ in this case as
$R_i$. Clearly, the optimal revenue is at most $R = \sum_{i\in [k]}
R_k$. We then design a discretization scheme based on $R$.

Let $\varepsilon \in (0,1)$ be a constant. Define $p_{max} = R$ and
$p_{min} = \varepsilon R/(2kn)$.
Our algorithm works as follows:
\begin{itemize}
\item[1] Compute revenue $r_i$ when price vector is
    \[\vp_i=\left(0,
      (1+\varepsilon)^{i_1}p_{min},(1+\varepsilon)^{i_2}p_{min},\ldots, (1+\varepsilon)^{i_k}p_{min}\right)\]
for all $0\leq{}i_1,i_2,\ldots, i_k = \lceil \log_{1+\varepsilon}{2kn/\varepsilon}\rceil$.
\item[2] Return $\vp_i$ with the maximum calculated $r_i$.
\end{itemize}




\medskip
{\bf \noindent \theoremref{thm:fptas} (restated).}\quad
{\sl
There is an FPTAS for the discriminative pricing problem in the fixed
partition case with constant $k$.
}
\begin{proof}
The set of total prices for each group in the algorithm is
$O(\log_{1+\varepsilon}(n/\varepsilon)) = O(\frac{\log
  (kn/\varepsilon)}{\varepsilon})$. Enumerating all possible prices takes
time $O( \log^k(n/\varepsilon)/\varepsilon^k)$, which is polynomial when $k$
is constant.

Assume $\vp_{opt}$ be the optimal price vector with optimal revenue
$R_{opt} \geq \max_iR_i \geq R/k$. Let $\vp'$ be the price vector, which is obtained by
rounding all prices $\vp_{opt}$ down to the closest Steiner price. Now
consider the error introduced by the rounding scheme. Notice that by
monotonicity, this rounding will only increase  the buying
probability of each user. For all prices that are rounded to $0$, the
revenue from the users offered with those prices is at most
$\varepsilon R/(2k)\leq \varepsilon R_{opt}/2$ with the optimal price vector. All other prices are
decreased by at most a factor of
$1+\varepsilon/2$. The revenue collected from the agents offered with
those prices in $\vp'$ is at least $1+\varepsilon/2$ of that with
$\vp$. Therefore, in total, we receive a revenue of at least
$(1-\varepsilon/2)R_{opt}/(1+\varepsilon/2) \geq (1-\varepsilon)R_{opt}$.
%
%
\end{proof}

\subsubsection{Choosing partition case with constant $k$.}
\label{sec:choosepart}

Now we consider the case that the partition $G$ can be chosen by our
algorithm in order to maximize the seller's revenue. More precisely we
define our problem as follows. Given the distribution of agents'
values and their influence network, the problem is to compute the
optimal $k$-partition of $G$ together with an optimal price vector
$\vp$ to maximize the seller's revenue.  We prove that when the
revenue is measured based on the
pessimistic equilibrium, this optimization problem is NP-hard even in
the fixed valuation case ($a_i=b_i$ for each player $i$).

In particular, we consider the following
special case of the problem: (i) k = 2, (ii) The valuation of the
agents is deterministic, and (iii) the price can only be 0 or 1. For
the case $k > 2$, we can add some dummy agents in our construction and
force the optimal solution to get the optimal revenue in our
construction for $k = 2$. We summarize the main result in the
following theorem.

\medskip
{\bf \noindent \theoremref{thm:prehard} (restated).}\quad
{\sl
It is NP-hard to compute the optimal pessimistic discriminative
pricing equilibrium in the choosing partition case.
}
\begin{proof}
We use a reduction from the \emph{Vertex Cover} problem. We show that
using any polynomial algorithm for the pessimistic discriminative
pricing problem in
choose partition case, any instance of the \emph{Vertex Cover} problem
can be solved in polynomial time. In an instance of an \emph{Vertex
  Cover} problem, given a graph $G=(V,E)$, we must specify whether a
subset $S \subset V$ exists such that $|S| \leq K$ and $\forall u,v$ such
that $(u, v) \in E$, we have $v \in S$ or $u \in S$.

Then we prove that, for each graph $G(V, E)$, the exists a network
$G'(V', E')$ and agents' valuation so that, if $r_{opt}$ is the
optimal revenue in $G'$, vertex number in minimum vertex cover of $G$
is $|V'| - r_{opt}$. First we will show how to construct $G'$ from
$G$. $V'$ is formed from the union of three parts, denoted by $A$, $D$
and $M$. In the first set $A$, there is one vertex $a_i$ with initial
value 0 for each vertex $v_i$ in $G$. The set $D$ is used to represent
the edges in original graph $G$. There is a vertex $d_e$ for each edge
in $G$. The initial values of all these vertices are 0. Let $e = (v_i,
v_j) $ be an edge in $G$. There is one edge from $a_i$, one from $a_j$
to $d_e$ weighted 1. The edges is used to represent the cover action,
if $a_i$ or $a_j$ buys the product, the $d_e$ vertex will also buy the
product. In addition, we also need construct $|D| \times |A|$ edges
weighted $\frac{1}{|D|}$ in $G'$, which from each $d_e$ to each
$a_i$. The edges means only if we cover all the vertex in $D$, all the
vertex in $A$ will just reach the value 1. Finally, we must use a
considerable large set $M(\geq |V|^3)$ to force the optimal solution
to set an price on 1. This is because if no final price is 1, it is
hard to guarantee all the vertex in $D$ will buy the product, which
also represent all the edges in $G$ be covered. Therefore, we put
independent vertexes in $M$ and weight them with 1. It is obvious that
we must set a price 1 and another 0 to get the best revenue and
activate the vertexes in $A$ and $D$. \par

Now we will show the optimal revenue $r_{opt}$ in $G'$ is equal to
$|V'| - |C|$. Let $C$ be minimum vertex cover of $G$ and $F$ be the
set of people who get a zero price in $G'$. In our configuration, a
final price of optimal solution must be 1, so the revenue is equal to
$|V' - F|$. Firstly, we show that $r_{opt} \geq |V| - |C|$.  If given
a minimum vertex cover $C$ for $G$, we can define $F$ according to
$C$. It means the seller will give the free product to the vertex in
$A$ if it represent a vertex in minimum cover $C$. By the definition
of vertex covering, all the vertexes in $D$ , which represents the all
edges in original graph $G$,  will be activated and their value will
all reach to 1. As a result, the value of all vertexes in $A$ will
also reach to 1 and will buy the product. Conclusively, our revenue
will reach to $|V'| - |C|$. At last, we will prove $r_{opt} \leq |V'|
- |C|$. Suppose $r_{opt} > |V'| - |C|$, there must be a free set $F$
to achieve the maximum revenue $r_{opt}$. Consider the structure of
$F$, if $F \cap M$ is not empty, we can eliminate these vertexes to
get a better revenue. If $F \cap D$ is not empty, each point $d \in F
\cap D$ can be replaced by the vertex in $A$ which have an edge to
it. This replacement never decreases our revenue because the new
vertex have a 1-weight edge to the old vertex. So there must be a $F
\subseteq A$, which could make the revenue greater than $|V'| -
|C|$. By the construction, we can convert $F$ to a vertex cover in
$G$. This is a contradiction to the definition of minimum vertex
cover.
\end{proof}

\section{Counter Example in \appendixref{SEC:MULTIPRICE}}\label{app:multiprice}

Assume $n$ is even.  Let $p_1$ be the price offered to agents
$\{1,3,\ldots, n-1\}$ and $p_2$ be the price offered to
$\{2,4,\ldots,n\}$. The influences are defined as $\inft_{j,i} =
2^{\lceil j/2-1\rceil}$ for $ i < j$ and $j-i$ is odd and greater than
$0$, and $0$ otherwise.  The valuation of agent $i$ is $2^{\lceil
i/2-1\rceil}$.  There are a total of $2^{\Omega(n)}$ structures for
the pessimistic equilibrium as $p_1$ and $p_2$ vary in $[0, +\infty)$.

\begin{proof}
We prove the following stronger statement by induction: for all prices
$p_1 \in (0, 2^{n/2})$ and $p_2\in (0,2^{n/2})$, there are at least $2^{n/2}$
structures.

Consider the base
case of $n=2$, with
agent 1 and 2. Since there is no influence among them, the number of
structure configuration is certainly $4 > 2$, with the price range
$p_1\in (0,2), \, p_2\in(0,2)$. Suppose the statement
is true for $n=2i$. For the case of $n=2(i+1)$, there are two additional
agents $2i+1$ and $2i+2$.

Consider the price range $p_1 \in (2^{i},2^{i+1}),\, p_2\in
(0,2^{i})$.  Agent
$2i+1$ will not buy the product while agent $2i+2$ does in this case. Since the
influence from agent $2i+2$ to every agent  with price $p_1$ is
$2^{i}$ except $2i+1$,
the ``effective price'' for all odd agents except $2i+1$ is $p_1 - 2^i \in
(0,2^{i})$. In such price range, there are at least $2^i$ structures
by induction. Symmetrically, the same conclusion holds
for price range $p_1 \in (0, 2^{i}),\, p_2\in (2^i,2^{i+1})$. Notice
these two price ranges have difference configuration for agents $2i+1$
and $2i+2$.  Therefore, in total there are at least $2\cdot 2^i
=2^{i+1}$ structures.
\end{proof}

\clearpage
\bibliographystyle{alpha}
\bibliography{game}

\end{document}